\documentclass[letterpaper,11pt]{article}

\usepackage{microtype}

\usepackage[utf8]{inputenc}
\usepackage[T1]{fontenc}
\usepackage[dvipsnames,usenames]{xcolor}
\usepackage[colorlinks=true,urlcolor=Blue,citecolor=Green,linkcolor=BrickRed]{hyperref}
\usepackage{amsmath,amssymb,amsthm,mathabx}
\usepackage{url}
\usepackage{todonotes}
\usepackage[noadjust]{cite}
\usepackage{mathtools}
\usepackage[inline]{enumitem}
\usepackage{algorithm}
\usepackage{algorithmicx}
\usepackage[noend]{algpseudocode}

\usepackage{authblk}
\usepackage{fullpage}

\bibliographystyle{plainurl}

\title{Edit Distance between Unrooted Trees in Cubic Time}
\date{}
\author[1]{Bartłomiej Dudek}
\author[1]{Paweł Gawrychowski}
\affil[1]{Institute of Computer Science, University of Wrocław, Poland}




\newcommand{\myparagraph}{\subparagraph*}

\newcommand{\Oh}{{O}}
\newcommand{\eps}{\varepsilon}
\newcommand{\TT}{T}

\newcommand{\TAB}{\Delta}
\newcommand{\ALLL}[1]{\delta(\cdot,#1)}
\newcommand{\ALLLs}[1]{\delta(*,#1)}
\newcommand{\ALLR}[1]{\delta(#1,\cdot)}
\newcommand{\LH}{P}

\newcommand{\TUS}{\TT_1^{u(*)}}
\newcommand{\mm}{m_{\LH}}

\newcommand{\group}{\textsc{Group}}
\newcommand{\processheavypath}{\textsc{ProcessHeavyPath}}

\newcommand{\computefrom}{\textsc{ComputeFrom}}
\newcommand{\insidegroup}{\textsc{InsideGroup}}

\newcommand{\ldepth}{\mathsf{ldepth}}
\newcommand{\subtree}{\mathsf{subtree}}
\newcommand{\merged}{\mathsf{merged}}

\newcommand{\apex}{\mathsf{apex}}

\newcommand{\bet}{I}
\newcommand{\spec}{\mathsf{special}}
\newcommand{\data}{\mathsf{Data}}

\algnewcommand\algorithmicforeach{\textbf{for each}}
\algdef{S}[FOR]{ForEach}[1]{\algorithmicforeach\ #1\ \algorithmicdo}
\SetLabelAlign{Left}{\makebox[1.5em][l]{#1}}

\newdimen{\algindent}
\setlength\algindent{1.5em}
\algnewcommand\LeftComment[2]{%
\hspace{#1\algindent}$\triangleright$ {#2} \hfill %
}

\newtheorem{theorem}{Theorem}[section]

\newtheorem{lemma}[theorem]{Lemma}

\newtheorem{definition}[theorem]{Definition}

\theoremstyle{definition}

\newtheorem{observation}[theorem]{Observation}

\newcommand{\FIGURE}[4]{
\begin{figure}[#1]
\begin{centering}
\includegraphics[scale={#2}]{{#3}.pdf}
\caption{#4}
\label{fig:#3}
\end{centering}
\end{figure}
}

\begin{document}
 
\maketitle

\begin{abstract}
Edit distance between trees is a natural generalization of the classical edit distance between strings, in which the
allowed elementary operations are contraction, uncontraction and relabeling of an edge.
Demaine et al. [ACM Trans. on Algorithms, 6(1), 2009] showed how to compute the edit distance
between rooted trees on $n$ nodes in $\Oh(n^{3})$ time. However, generalizing their method to
unrooted trees seems quite problematic, and the most efficient known solution remains
to be the previous $\Oh(n^{3}\log n)$ time algorithm by Klein [ESA 1998]. Given the lack
of progress on improving this complexity, it might appear that unrooted trees are simply more
difficult than rooted trees. We show that this is, in fact, not the case, and edit distance between
unrooted trees on $n$ nodes can be computed in $\Oh(n^{3})$ time.
A significantly faster solution is unlikely to exist, as Bringmann et al. [SODA 2018]
proved that the complexity of computing the edit distance between rooted trees cannot be
decreased to $\Oh(n^{3-\eps})$ unless some popular conjecture fails, and the lower bound
easily extends to unrooted trees.
We also show that for two unrooted trees of size $m$ and $n$, where $m\le n$, our algorithm can
be modified to run in $\Oh(nm^2(1+\log\frac nm))$. This, again, matches the complexity achieved
by Demaine et al. for rooted trees, who also showed that this is optimal if we restrict ourselves
to the so-called decomposition algorithms.
\end{abstract}


\section{Introduction}

Computing the edit distance between two strings~\cite{Wagner74} is the most well-known
example of dynamic programming. Thanks to the new fine-grained complexity paradigm,
we known that this simple approach is essentially the best possible~\cite{BackursI15,AbboudHWW16},
so the problem appears to be solved from the theoretical perspective. However, in many
real-life applications we would like to operate on more complicated structures than strings.
As a prime example, while primary structure of RNA can be seen as a string, computational
biology is often interested in comparing also secondary structures. Second structure of RNA
can be modeled as an ordered tree~\cite{ShapiroZ90,HochsmannTGK03}, so we would like
to generalize computing the edit distance between strings to computing the edit
distance between ordered trees.

Tai~\cite{Tai} defined the edit distance between two ordered trees as the minimum total cost of a sequence of elementary operations that transform one tree into the other.
For unrooted trees, which are the focus of this paper, the trees are edge-labeled, and we have three elementary operations: contraction, uncontraction and relabeling of an edge.
We think that the trees are embedded in the plane, i.e., there is a cyclic order on the neighbors of every node that is preserved
by the contraction/uncontraction. See Figure~\ref{fig:contraction}.
The cost of an operation depends on the label(s) of the edge(s): $c_{del}(\tau)$, $c_{ins}(\tau)$, $c_{match}(\tau_1,\tau_2)$, respectively.
We assume that every operation has the same cost as its reverse counterpart: $c_{del}(\tau)=c_{ins}(\tau)$, $c_{match}(\tau_1,\tau_2)=c_{match}(\tau_2,\tau_1)$, and each edge participates in at most one elementary operation.

\FIGURE{b}{0.7}{contraction}
{Contraction and uncontraction of the edge with label $x$ costs $c_{del}(x)=c_{ins}(x)$.}

Computing the edit distance between trees is used as a measure of similarity
in multiple contexts. The most obvious, given that some biological
structures resemble trees, is computational biology~\cite{ShapiroZ90}.
Others include comparing XML data~\cite{BKG03,XML3,Ferragina}, programming
languages~\cite{Hoffmann82}. Others, less obvious, include computer
vision~\cite{BellandoK99,Klein:CV,Klein:ShockGraphs,Klein:ShapeMatching},
character recognition~\cite{Rico-JuanM03}, automatic grading~\cite{AlurDGDV},
and answer extraction~\cite{Yao2013answer}. See also the survey by Bille~\cite{Bille2005}.

Tai~\cite{Tai} introduced the edit distance between rooted node-labeled trees
on $n$ nodes and designed an $\Oh(n^{6})$ algorithm. Zhang and Shasha~\cite{Shasha}
improved the time complexity to $\Oh(n^{4})$ by designing a recursive formula, which
reduces computing the edit distance between two trees to computing the edit distance
between two smaller trees. Then, Klein~\cite{Klein} considered the more general
problem of computing the edit distance between unrooted edge-labeled trees
and further improved the complexity to $\Oh(n^{3}\log n)$ using essentially
the same formula, but applying it more carefully to restrict the number of different
trees that appear in the whole process. This high-level idea of using the recursive
formula can be formalized using the notion of
decomposition strategy algorithms as done by Dulucq and Touzet~\cite{DT05}.
Finally, Demaine et al.~\cite{DMRW} further
improved the complexity for rooted node-labeled trees to $\Oh(n^{3})$.
For trees of different sizes $m$ and $n$, where $m\le n$, their algorithm runs in $\Oh(nm^2(1+\log\frac nm))$ time. 
At a very high level, the gist of their improvement was to apply the heavy path
decomposition to both trees, while in Klein's algorithm only one tree is decomposed.
This requires some care, as switching from being guided by the heavy
path decomposition of the first tree to the second tree cannot be done too
often.

Although Demaine et al.~\cite{DMRW} showed that their algorithm is optimal
among all decomposition strategies, it is not clear that any algorithm must be based on such a strategy. Nevertheless, there
has been no progress on beating the best known $\Oh(n^{3})$ time worst-case
bound for exact tree edit distance. Pawlik and Augsten~\cite{Pawlik15} presented an
experimental comparison of the known algorithms. Aratsu et al.~\cite{AratsuHK10},
Akutsu et al.~\cite{Akutsu10}, and Ivkin~\cite{Ivkin12} designed approximation algorithms.
Only very recently a convincing explanation for the lack of progress on improving
this worst-case complexity has been found by Bringmann et al.~\cite{TED_LowerBound},
who showed that a significant improvement on the cubic time complexity for rooted
node-labeled trees is rather unlikely: an $\Oh(n^{3-\eps}$ algorithm for computing the edit distance
between rooted trees on $n$ nodes implies an $\Oh(n^{3-\eps})$ algorithm for APSP
(assuming alphabet of size $\Theta(n)$) and an $\Oh(n^{k(1-\eps)})$ algorithm for
Max-Weight $k$-Clique (assuming alphabet of sufficiently large but constant size).

Thus, the complexity of computing the edit distance between rooted trees
seems well-understood by now. However, in multiple important applications,
the trees are, in fact, unrooted. For example, Sebastian et al.~\cite{Klein:ShockGraphs}
use unrooted trees to recognize shapes (in a paper with over 700 citations).
Unfortunately, while the almost 20 years old algorithm presented by Klein works for unrooted trees in $\Oh(n^{3}\log n)$
time, it is not clear how to translate Demaine et al.'s improvement to the unrooted
case. In fact, even if one of the trees is a rooted full binary tree and the other is a simple
caterpillar, their approach appears to use $\Oh(n^{4})$ time, and
it is not clear how to modify it. Given the lack of further progress, it might seem that unrooted trees are simply more difficult than rooted trees.

\myparagraph{Our contribution.}
We present a new algorithm for computing the edit distance between unrooted trees which runs in $\Oh(n^3)$ time and $\Oh(n^2)$ space.
For the case of trees of possibly different sizes $n$ and $m$ where $m\le n$, it runs in $\Oh(nm^2(1+\log\frac nm))$ time and $\Oh(nm)$ space.
This matches the complexity of Demaine et al.'s algorithm for the rooted case and improves Klein's algorithm for the unrooted case.
By a simple reduction, unrooted trees are as difficult as rooted trees, so our algorithm is optimal among all decomposition algorithms~\cite{DMRW},
and significantly faster approach is unlikely to exists unless some popular conjecture fails~\cite{TED_LowerBound}.

Our starting point is dynamic programming using the recursive formula of Zhang and Shasha, similarly
as done by Klein and Demaine et al. (in Appendix~\ref{se:demaine} we present a self-contained
description of the latter with a simpler analysis). However, instead of presenting the computation
in a top-down order, we prefer to work bottom-up. This gives us more control and allows us to be more precise
about the details of the implementation. In the simpler $\Oh(n^{3}\log\log n)$ version of the algorithm, we apply the
heavy path decomposition to both trees. As long as the first tree is sufficiently big, we proceed similarly as Klein, that is,
look at its heavy path decomposition.
However, if the first tree is small (roughly speaking) we consider the heavy path decomposition of the second tree and design
a new divide and conquer strategy that is applied on every heavy path separately.

To improve the complexity to $\Oh(n^{3})$, instead of a global parameter we modify the divide and conquer strategy
so that the larger the first tree is the sooner it terminates and switches to another approach. 
A careful analysis of such modification leads to $\Oh(nm^2(1+\log^2 \frac nm))=\Oh(n^3)$ running time.
Finally, we shave one $\log\frac{n}{m}$ by making the divide and conquer sensitive to the sizes of the
subtrees attached to the heavy path instead of its length, that is, making some nodes more important
than the other, reminiscing the so-called telescoping trick~\cite{Blum:Telescoping,Cole:DictionaryMatching}.
All the improvements applied together lead to the overall $\Oh(nm^2(1+\log \frac nm))$ running time, which matches the complexity of the algorithm for rooted trees by Demaine et al. \cite{DMRW}.

\myparagraph{Roadmap.}
In Section~\ref{se:preliminaries} we introduce the notation and the recursive formula that are then used to present Klein's algorithm adapted for the rooted case.
Next, in Section~\ref{se:back_to_unrooted} we return to the unrooted case, introduce new notation and transform both input trees by adding some auxiliary edges.
Then, in Section~\ref{se:unrooted_loglog} we present our new $\Oh(n^3\log\log n)$ algorithm for the unrooted case which already improves the state-of-the-art Klein's algorithm and is essential for understanding our main $\Oh(n^3)$ algorithm described in Section~\ref{se:n^3}.
Both algorithms are described in a bottom-up fashion. In the simpler $\Oh(n^{3}\log\log n)$ version we first assume that one of the trees is a caterpillar and then generalize to arbitrary trees. In the more complicated $\Oh(n^{3})$ algorithm we start with an even more restricted
case of one tree being a caterpillar and the other a rooted full binary tree. When analyzing both algorithms we only bound the total number of
considered subproblems. As explained in Section~\ref{se:implementation}, this can be translated into an implementation with the same running
time. Finally, in Section~\ref{se:lower_bound}, we transfer the known lower bounds to the unrooted case.

\section{Preliminaries}\label{se:preliminaries}

We are given two unrooted trees $\TT_1,\TT_2$ with every edge labeled by an element of $\Sigma$ and a cyclic order on the neighbors of every node.
For every label $\alpha\in\Sigma$, we know the cost $c_{del}(\alpha)=c_{ins}(\alpha)$ of contracting or uncontracting of an edge with label $\alpha$.
For every $\alpha,\beta\in\Sigma$, $c_{match}(\alpha,\beta)=c_{match}(\beta,\alpha)$ is the cost of changing the label of an edge from $\alpha$ to $\beta$.
All costs are non-negative and each edge can participate in at most one operation.
Edit distance between $\TT_1$ and $\TT_2$ is defined as the minimum total cost of a sequence of the above operations transforming $\TT_1$ to $\TT_2$.
Equivalently, it is the minimum cost of transforming both the trees to a common tree using only contracting and relabeling operations, as each operation has the same cost as its undo-counterpart.
Note that for unrooted trees, edit distance is the minimum edit distance over all possible rootings of $\TT_1$ and $\TT_2$, where a rooting is uniquely determined by choice of the root and the leftmost edge from the root.

We first assume, that both trees are of equal size $n=|\TT_1|=|\TT_2|$, but later we will also address the case when one of them is significantly larger than the other.
We start with the case when both trees are rooted, which is essential for the understanding of the unrooted case.
Then, every node has its children ordered left-to-right.
We also assume that both (rooted) trees are binary, as we can add $\Oh(n)$ edges with a fresh label that costs $0$ to contract and $\infty$ to relabel.

\myparagraph{Naming convention.}
We use a similar naming convention as in \cite{DMRW}.
We call main left and right edges of a (rooted) tree respectively the leftmost and rightmost edge from the root.
For a given rooted tree $\TT$ with at least $2$ nodes, let $r_\TT$ denote the right main edge of $\TT$ and $R_\TT$ denote the rooted subtree of $\TT$ that is under (not including)~$r_\TT$.
By $\TT - r_\TT$ we denote a tree obtained from $\TT$ by contracting edge $r_\TT$ and by $\TT - R_\TT$ a tree obtained from $\TT$ by contracting edge $r_\TT$ and all edges from its subtree $R_\TT$.
Thus the tree $\TT$ consists of $R_\TT$, the edge $r_\TT$ and edges $(\TT-R_\TT)$.
$l_\TT$ and $L_\TT$ are defined analogously and $\TT^v$ denotes subtree of $\TT$ rooted at $v$.
See Figure~\ref{fig:naming_and_rec}(a) and (b).

We define a pruned subtree of a tree $\TT$ to be the tree obtained from $\TT$ by a sequence of contractions of the left or right main edge.
Note that every pruned subtree is uniquely represented by the pair of its left and right main edges. 
It also corresponds to an interval on the Euler tour of the tree started in the root when we remove from the inverval each edge that occurs once.
Thus we can completely represent a pruned subtree in $\Oh(1)$ space by storing two edges.
We can preprocess all the $\Oh(n^2)$ pruned subtrees $\TT'$ of a tree $\TT$ to be able to obtain trees $R_{\TT'},L_{\TT'},\TT'-l_{\TT'},\TT'-r_{\TT'}$ and edges $r_{\TT'},l_{\TT'}$ in $\Oh(1)$ time.

\myparagraph{Dynamic programming.}
Zhang and Shasha \cite{Shasha} introduced the following recursive formula for computing the edit distance between two rooted trees:
\begin{lemma}\label{le:dp}
Let $\delta(F,G)$ be the edit distance between two pruned subtrees $F$ and $G$ of respectively $\TT_1$ and $\TT_2$. Then:
\begin{itemize}
 \item $\delta(\emptyset,\emptyset)=0$
 \item $\delta(F,G)= \min
 \begin{cases}
  \delta(F-r_F,G)+c_{del}(r_F) & \text{ if } F\ne\emptyset\\
  \delta(F,G-r_G)+c_{del}(r_G) & \text{ if } G\ne\emptyset\\
  \delta(R_F,R_G)+\delta(F-R_F,G-R_G)+c_{match}(r_F,r_G) & \text{ if } F,G\ne\emptyset\\
 \end{cases}$
\end{itemize}
The above recurrence also holds if we contract or match the left main edge.
\end{lemma}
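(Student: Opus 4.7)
The plan is to use the standard equivalence between tree edit distance and the minimum cost of a Tai-style edge mapping: $\delta(F,G)$ equals the minimum, over all partial matchings $M$ between the edges of $F$ and $G$ that preserve the ancestor relation and the left-to-right sibling order, of $\sum_{(e,e')\in M} c_{match}(e,e') + \sum_{e\notin M} c_{del}(e)$. This equivalence comes from identifying matched pairs with relabels and unmatched edges with contractions, together with the assumption that each edge participates in at most one elementary operation. Under this characterization, the recursive identity becomes a combinatorial statement about restrictions of Tai mappings.

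The upper bound $\delta(F,G)\le \min(\cdots)$ is immediate: each branch describes an explicit edit strategy whose cost matches the claimed quantity. For the third branch, concatenating an optimal mapping between $R_F$ and $R_G$, an optimal mapping between $F-R_F$ and $G-R_G$, and the pair $(r_F,r_G)$ yields a valid Tai mapping on $(F,G)$, since $r_F$ (resp.\ $r_G$) is an ancestor of all edges of $R_F$ (resp.\ $R_G$) and incomparable with all edges of $F-R_F$ (resp.\ $G-R_G$), so both preservation properties are maintained. The first two branches are even simpler.

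For the lower bound, fix an optimal mapping $M^*$ realizing $\delta(F,G)$. If $r_F\notin M^*$, then $M^*$ is a valid mapping between $F-r_F$ and $G$, whence $\delta(F,G)\ge \delta(F-r_F,G)+c_{del}(r_F)$; the case $r_G\notin M^*$ is symmetric. Otherwise write $(r_F,e),(e',r_G)\in M^*$ and claim that $e=r_G$. Suppose not, so also $e'\ne r_F$. Ancestor-preservation rules out $r_F$ being a proper ancestor of $e'$ (since that would make $e$ a proper ancestor of the root-incident $r_G$) and rules out $e'$ being a proper ancestor of $r_F$ by the symmetric reason. Hence $e'$ and $r_F$ are incomparable in $F$, which, since $F$ is binary and $r_F$ is root-incident, places $e'$ in $\{l_F\}\cup L_F$ and thus strictly before $r_F$ in preorder. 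Order-preservation then forces $r_G<e$ in preorder, which in $G$ can only mean $e\in R_G$, making $r_G$ a proper ancestor of $e$; ancestor-preservation now demands $e'$ be a proper ancestor of $r_F$, contradicting the established incomparability. Therefore $(r_F,r_G)\in M^*$, and ancestor-preservation splits $M^*$ into a valid mapping on $(R_F,R_G)$ and one on $(F-R_F, G-R_G)$, yielding the third branch.

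The main obstacle is the case analysis showing $(r_F,r_G)\in M^*$ whenever both are matched; the essential point is that being root-incident and rightmost leaves no consistent pairing for $r_F$ or $r_G$ under the two preservation properties of Tai mappings other than matching them to each other. The analogous recurrence for the left main edges follows by the mirror argument, with ``leftmost'' replacing ``rightmost'' and $L_F, l_F$ replacing $R_F, r_F$ throughout; alternatively it reduces to the present statement after reflecting the cyclic order at each node.
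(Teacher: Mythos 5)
Your proof is correct in substance, but note that the paper itself gives no proof of this recurrence: it is quoted from Zhang and Shasha, so there is no in-paper argument to compare against. What you supply is the standard justification via the Tai-style mapping characterization (ancestor- and sibling-order-preserving partial matchings of edges), which is exactly the classical route; the paper's assumption that each edge participates in at most one elementary operation is indeed what makes that characterization legitimate, and your case analysis on whether $r_F$ and $r_G$ are matched, culminating in the claim that if both are matched they must be matched to each other, is the right and complete argument, including the split of $M^*$ into mappings on $(R_F,R_G)$ and $(F-R_F,G-R_G)$. One small inaccuracy: you justify ``$e'$ lies in $\{l_F\}\cup L_F$'' by appealing to $F$ being binary, but pruned subtrees of a binary tree need not be binary at the root (contracting a main edge promotes the grandchildren), so $e'$ may sit in a middle root-subtree. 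This does not hurt you: all you need is that $e'$, being incomparable with the root-incident rightmost edge $r_F$, lies in a root-subtree strictly to the left of $r_F$ and hence precedes it in preorder, and that conclusion holds regardless of the root's degree; similarly in $G$ the only edges after $r_G$ in preorder are those of $R_G$. With that correction the contradiction argument and the mirror statement for the left main edge go through as you wrote them.
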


It contracts the right main edge in one of the two trees or matches the right main edges of the two trees. 
In the latter case, we get two independent subproblems $(R_F,R_G)$ and $(F-R_F,G-R_G)$ that must be transformed to equal trees.
See Figure~\ref{fig:naming_and_rec}(c) for an illustration of this case.

\FIGURE{b}{0.74}{naming_and_rec}
{(a) Tree $F$ with both $r_F$ and $R_G$ contracted. (b) $F$ with its right main edge contracted. (c) When both right main edges are not contracted we obtain two independent problems.}

To estimate time complexity of the algorithm, we only count different pairs $(F,G)$ for which $\delta(F,G)$ is computed. Each such value
is computed at most once and stored. Note that $F$ is always a pruned subtree of $\TT_1$,
while $G$ is a pruned subtree of $\TT_2$, thus there are $\Oh(n^4)$ possible pairs $(F,G)$.
In the worst case, all such pairs might be considered. The formula from Lemma~\ref{le:dp} can
be evaluated in constant time, and any previously computed value can be retrieved in constant
time from a four-dimensional table.

The above algorithm always contracts or relabels the right main edge. A more deliberate choice
of direction (whether to choose the left or right main edge) will lead to a different behavior
of the algorithm which in turn might result in a smaller total number of considered pairs $(F,G)$.
Such a family of algorithms is called decomposition algorithms.
When analyzing the time complexity of such an algorithm, we assume that any already computed 
$\delta(F,G)$ can be retrieved in constant time. If our goal is to compute significantly fewer
than $\Oh(n^4)$ subproblems, we cannot afford to allocate the four-dimensional table anymore.
An obvious solution is to store the already computed values in a hash table, but this requires
randomization. In Section~\ref{se:limited_space} we explain how to carefully arrange the order of
the computation and store the partial results as to obtain deterministic algorithms with the same 
running time.

While the formula from Lemma~\ref{le:dp} suggests a top-down strategy, we phrase the
algorithms in a bottom-up perspective, which allows us to present the details of the
computation more precisely.
The aim of all the algorithms is to compute $\delta(\TT_1,\TT_2)$ knowing only $\delta(\emptyset,\cdot)$ and $\delta(\cdot,\emptyset)$, as the costs of contraction of an arbitrary pruned
subtree are precomputed.

\myparagraph{Klein's $\Oh(n^3\log n)$ algorithm.}
Klein's algorithm \cite{Klein} uses heavy path decomposition \cite{SleatorTarjan} of~$\TT_1$.
The root is called light and every node calls its child with the largest subtree (and the leftmost in case of ties) heavy and all other children light.
An edge is heavy if it leads to the heavy child.

While applying the dynamic formula from Lemma~\ref{le:dp}, Klein's algorithm uses a strategy that we call ``avoiding the heavy child'' in $\TT_1$.
It chooses the direction (either left or right) in such a way that the edge leading to the heavy child of the root is contracted or relabeled as late as possible.
Observe that contracting the main edge not leading to the heavy child of the root of a pruned subtree $\TT$, does not change the heavy child of the root of $\TT$, as its subtree is still the largest.
Note that Klein's strategy does not depend on the considered pruned subtree of $\TT_2$.

Even though Klein uses top-down view to describe his algorithm, we find it more convenient to implement the computations in bottom-up order.
Therefore the algorithm processes heavy paths of $\TT_1$ in the bottom-up order as shown in Algorithm~\ref{alg:klein}.
Consider a heavy path $H$ with nodes $v_1,v_2,\ldots,v_{|H|}$ where $v_1$ is the closest node to the root and $v_{|H|}$ is a leaf. By $\ALLR{\TT_1^v}$ we denote a table of $\Oh(n^2)$ distances between tree $\TT_1^v$ and all pruned subtrees of $\TT_2$.
The algorithm considers all nodes on $H$ also bottom-up.
It starts from $\ALLR{\TT_1^{v_{|H|}}}=\ALLR{\emptyset}$, which is precomputed, and then iteratively computes $\ALLR{\TT_1^{v_i}}$ from $\ALLR{\TT_1^{v_{i+1}}}$ for decreasing values of $i$.
We denote such a step by $\computefrom$ subroutine.
Note that in every step the strategy avoiding the heavy child always chooses the same direction (recall that the tree is binary) and visits altogether at most $\Oh(n)$ pruned subtrees of $\TT_1$.
Also when actually implementing the $\computefrom$ step we proceed bottom-up.
That is, suppose we have already computed $\ALLR{\TT_1^{v_{i+1}}}$ and that $v_{i+1}$ is the left child of $v_i$.
Then the strategy avoiding the heavy child says R that is chooses first the right main edge to consider.
We compute $\ALLR{\TT_1^{v_i}}$ as follows.
First we consider the tree $\TT_1^{v_{i+1}}\cup \{\{v_i,v_{i+1}\}\}$ (we call this uncontracting the heavy edge), next $\TT_1^{v_{i+1}}\cup \{\{v_i,v_{i+1}\},\{v_i,w\}\}$ if exists a light child $w$ of $v_i$ and then uncontract the subsequent edges of $\TT_1^w$.
This guarantees that while computing $\delta(F,G)$ the subtrees $F-r_F$ and $F-R_F$ have been already processed.
Pruned subtrees of $\TT_2$ are also considered in the order of increasing sizes.
Clearly, as argued for Zhang and Shasha's algorithm, the algorithm visits $\Oh(n^2)$ pruned subtrees of $\TT_2$, so we need to bound the number of pruned subtrees of $\TT_1$.

\begin{algorithm}[t]
\begin{algorithmic}[1]
  \ForEach{heavy path $H$ in $\TT_1$ in the bottom-up order}
  \State let $v_1,v_2,\ldots,v_{|H|} = H$ 
  \For{$i=|H|-1,\ldots,0$}
    \Statex \LeftComment{2}{avoiding the heavy child:}
    \State $\computefrom(\ALLR{\TT_1^{v_i}},\ALLR{\TT_1^{v_{i+1}}})$
  \EndFor
  \EndFor
\end{algorithmic}
\caption{Klein's algorithm.}
\label{alg:klein}
\end{algorithm}

\begin{observation}\label{obs:big_cuts}
 Consider an arbitrary tree $\TT$. Suppose that strategy avoiding the heavy child in $\TT$ says $R$ for a pruned subtree~$F$.
 Then $F-R_F$ is also obtained by a sequence of contractions of the main edge according to the strategy.
\end{observation}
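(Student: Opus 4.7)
The plan is to prove this by induction on the number of contractions needed to pass from $F$ to $F - R_F$, which equals $1 + |E(R_F)|$, using the fact stated just before the observation: contracting a main edge that does not lead to the heavy child of the root preserves which child is heavy, because the heavy subtree remains the (leftmost) largest. Since the strategy says $R$ for $F$, the heavy child of the root of $F$ is necessarily its left child, and $r_F$ is not the heavy main edge.

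First I would contract $r_F$ to obtain $F' := F - r_F$. The left subtree of the root is untouched, while the new right siblings of the left child are the former children of the right child of $F$, each with subtree strictly smaller than the right child's subtree, which was itself no larger than the left subtree. Hence the left child remains heavy in $F'$, and the strategy again outputs $R$ for $F'$. Iterating, every intermediate pruned subtree preserves the original left subtree and has its remaining non-left-subtree nodes drawn from the descendants (in $F$) of the original right child of $F$. Consequently the strategy continues to say $R$ at every step, and each contraction removes exactly one edge from $\{r_F\}\cup E(R_F)$.

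The main subtlety, which the argument above handles, is to check that no right-main-edge contraction ever reaches into the left subtree or introduces an edge from outside $\{r_F\}\cup E(R_F)$. This is immediate from the invariant just noted: the right main edge of every intermediate pruned subtree joins the root to a descendant of the original right child of $F$, so its contraction always removes an edge that was originally in $\{r_F\}\cup E(R_F)$. After exactly $1 + |E(R_F)|$ such contractions this set is exhausted, leaving the root together with its untouched left subtree, which is precisely $F - R_F$.
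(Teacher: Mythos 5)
The paper offers no explicit proof of this observation: it is treated as immediate from the remark just before it, namely that contracting a main edge not leading to the heavy child leaves the heavy child unchanged, and your induction is exactly that intended argument, so your proposal is essentially the paper's (implicit) proof spelled out. One imprecision: the root of a pruned subtree $F$ need not be binary (contractions promote grandchildren to children), so when the strategy says $R$ the heavy child need not be \emph{the left child}---it is merely not the rightmost one---and correspondingly it is not just ``the left subtree'' but every subtree other than the rightmost that stays untouched; your invariant and the key size argument (promoted children have strictly smaller subtrees than the contracted rightmost child, which is at most the heavy child's size, so the heavy child and the left main edge persist and the strategy keeps saying $R$) go through verbatim after this rewording. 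You should also dispose of the degenerate case where the root has a single child, so that $r_F$ itself is the heavy edge: there $F-R_F$ is the empty tree, which the strategy reaches trivially at the end of its contraction sequence.
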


The observation implies that in order to count the relevant intervals of $\TT_1$ we can only consider the trees obtained by contraction of the main edge according to the strategy, and trees of the form $L_F$ and $R_F$.
Note that the only trees of the form $L_F$ or $R_F$ that are not obtained in this way are rooted at a light node so will be counted separately for another heavy path. 

We denote $\apex(F)$ to be the top node of the heavy path containing the lowest common ancestor of all endpoints of edges of $F$.
In other words, $\apex(F)$ is the lowest light ancestor of all edges of~$F$.
Now grouping all the visited pruned subtrees by their $\apex$-es we bound their total number:

\begin{observation}
 For an arbitrary tree $\TT$, there is $\sum_{v:\mathrm{\ light\ node\ in\ }\TT} |\TT^v|$ pruned subtrees of $\TT$ visited while applying strategy avoiding the heavy child of $\TT$.
\end{observation}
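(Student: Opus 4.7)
The plan is to group the visited pruned subtrees by the heavy path during whose processing they arise, and to show that every heavy path $H$ with top $v$ contributes exactly $|\TT^v|$ visited subtrees. Since the tops of heavy paths are in bijection with the light nodes of $\TT$, summing the per-path contributions will yield $\sum_{v\text{ light in }\TT}|\TT^v|$ as claimed.

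I fix a heavy path $H=v_1,v_2,\ldots,v_{|H|}$ with $v_1=v$; note that $v_{|H|}$ is a leaf by construction of the heavy path decomposition. Processing $H$ begins with the precomputed base $\ALLR{\TT^{v_{|H|}}}=\ALLR{\emptyset}$, counted as one subtree, and then performs a $\computefrom$ step for each $i=|H|-1,\ldots,1$. By the explicit description of the subroutine, step $i$ moves from $\TT^{v_{i+1}}$ to $\TT^{v_i}$ by uncontracting in sequence the heavy edge $\{v_i,v_{i+1}\}$, then (if $v_i$ has a light child $w$) the light edge $\{v_i,w\}$, and finally the edges of $\TT^w$ one at a time. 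Each uncontraction introduces exactly one new pruned subtree, so the number visited in step $i$ equals $|\TT^{v_i}|-|\TT^{v_{i+1}}|$. Summing with the base and telescoping gives
\[
1+\sum_{i=1}^{|H|-1}\bigl(|\TT^{v_i}|-|\TT^{v_{i+1}}|\bigr)\;=\;1+|\TT^{v_1}|-|\TT^{v_{|H|}}|\;=\;|\TT^v|,
\]
using $|\TT^{v_{|H|}}|=1$.

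It remains to rule out double counting across heavy paths. By the preceding Observation~\ref{obs:big_cuts}, every relevant pruned subtree visited during processing of $H$ is either obtained by a sequence of contractions of main edges from $\TT^{v_i}$ for some $v_i\in H$, or is a tree of the form $L_F$ or $R_F$ rooted at a light node, which is itself charged to the separate heavy path starting at that light node. Since heavy paths partition the nodes of $\TT$, distinct heavy paths produce disjoint families of visited subtrees, so the per-path bound sums to the desired quantity. The main delicate point is verifying that the bottom-up $\computefrom$ procedure enumerates precisely the subtrees prescribed by the top-down ``avoiding the heavy child'' strategy---no more and no fewer---so that the telescoping is tight; this correspondence is exactly what the previous observation supplies, after which the counting itself is routine.
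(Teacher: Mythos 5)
Your proof is correct and follows essentially the same route as the paper: it groups the visited pruned subtrees by the heavy path (equivalently, by their $\apex$, i.e.\ the light node at the top), counts the contraction/uncontraction chain for each such path as $|\TT^v|$ subtrees, and uses Observation~\ref{obs:big_cuts} together with the remark that the remaining $L_F$/$R_F$ trees are rooted at light nodes and hence charged to their own heavy paths. Your explicit telescoping over the $\computefrom$ steps is just a more detailed, bottom-up rendering of the paper's per-apex count, so no substantive difference.
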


Let light-depth $\ldepth(u)$ of a node $u$ be the number of light nodes that are ancestors of $u$ (node is also an ancestor of itself).
Because $\ldepth(u)\le~\log(n)~+~1$, we obtain:

\begin{equation}\label{eq:light_nodes}
 \sum_{v:\text{ light node in }\TT_1} |\TT_1^v| = \sum_{v:\text{ node in }\TT_1} \ldepth(v) \in\Oh(n\log n)
\end{equation}

\noindent
Recalling that there are $\Oh(n^2)$ relevant intervals of $\TT_2$ we conclude that Klein's algorithm visits $\Oh(n^3\log n)$ subproblems.
As we assume the constant time memoization, it runs in $\Oh(n^3\log n)$ time.

\section{Back to Unrooted Case}\label{se:back_to_unrooted}

Recall that edit distance between two unrooted trees $\TT_1$ and $\TT_2$ is the minimum edit distance between $\TT_1$ and $\TT_2$ over all possible rootings of them, where rooting is determined by the root of the tree and its the left main edge.
As Klein \cite{Klein} mentioned, it is enough to choose an arbitrary rooting in one of the trees and try all possible rootings of the other to find an optimal setting.
Observe, that we can treat the Euler tour of $\TT_2$ as a cyclic string and represent every pruned subtree of $\TT_2$ as an interval of it, for all possible rootings of $\TT_2$.
Thus Klein's algorithm works in $\Oh(n^3\log n)$ time also for the edit distance between unrooted trees.
Before we present our faster algorithm for this case, we need to introduce some new definitions.
Recall, that even in the unrooted case, we first arbitrarily root both trees and the initial rooting remains unchanged throughout the algorithm.

\myparagraph{Darts.}
We replace every edge $e$ with two darts corresponding to two ways of traversing the edge, either down $e^\downarrow$ or up the tree $e^\uparrow$ (with respect to the fixed rooting).
Subtree of a dart $\subtree((u,v))$ is defined as the subtree rooted at node $v$, when $u$ is its parent.
Note that $e^\uparrow$ and $e^\downarrow$ belong neither to $\subtree(e^\uparrow)$ nor to $\subtree(e^\downarrow)$.
Every pruned subtree of (unrooted) tree is uniquely represented by its left and right main edges or a dart (if there is one edge from the root).
See Figure~\ref{fig:repr}.

\FIGURE{t}{0.81}{repr}
{Every pruned subtree is uniquely represented by its left and right main edges or a dart.}

\myparagraph{Auxiliary edges for rootings.}
We observed that every rooting of $\TT_2$ corresponds to a subrange of a cyclic Euler tour $E_{\TT_2}$, but later it will be convenient to represent every rooting as a subtree of a dart.
For this purpose, we add new edges labeled with a fresh label $\# \notin\Sigma$ which will be used only to denote a rooting.
Setting $c_{del}(\#)=0$ and $c_{match}(\#,\cdot)=\infty$ we force that these edges are only contracted.
For every node $v$ we add new edges alternating with the original ones.
Thus in total, there are $2(n-1)$ edges added.
Using these new edges we can compute edit distance between the unrooted trees from the values of $\delta(\TT_1,\subtree(d))$ for all darts $d$ in $\TT_2$. Thus, our aim is to fill the table $\TAB$ where $\TAB[u,d]:=\delta(\TT_1^u,\subtree(d))$.

\myparagraph{Auxiliary edges to bound the degrees.}
As the last step, again we add $\Oh(n)$ edges with appropriate costs as to ensure that the degree of every node is at most $3$. Observe that the cost of the optimal solution for the modified trees is the same as for the initial ones and having a sequence of operations for the modified trees, we can easily obtain an optimal sequence for the original instance of the problem.

\section{\texorpdfstring{$\Oh(n^3\log\log n)$}{O(n3loglogn)} Algorithm for Unrooted Case}\label{se:unrooted_loglog}

After initial modifications both trees are binary and the algorithm needs to fill the table $\TAB$ where $\TAB[u,d]:=\delta(\TT_1^u,\subtree(d))$ for all nodes $u\in\TT_1$ and darts $d\in\TT_2$.
We first run Demaine et al.'s algorithm operating on labels on edges instead of nodes which computes $\delta(\TT_1^u,\TT_2^v)$ for all nodes $u\in\TT_1$ and $v\in\TT_2$ in $\Oh(n^3)$ time and stores them in $\TAB[u,e_v^\downarrow]$ where $e_v^\downarrow$ is the dart to $v$ from its parent.
Now we need to fill the remaining fields $\TAB[u,e^\uparrow]$ for all darts $e^\uparrow$ up the tree $\TT_2$.

This is the main difficulty in the unrooted case, in which we need to handle many big subtrees which are significantly different from each other.
Our approach is to successively reduce different subproblems to smaller ones, in a way that there are fewer subproblems to consider in the next step.
We use divide and conquer paradigm, in which there is more and more sharing after every step.

In the beginning, we call each node of $\TT_1$ and $\TT_2$ light or heavy as in the Klein's algorithm and all the time the notion is with respect to the initial rootings.
Similarly, the notion of traversing an edge up or down the tree is always with respect to the rooting.
Recall that we denote $\apex(\TT)$ as the top node on the heavy path containing the lowest common ancestor of all edges of $\TT$.
We first fix a global value $b$, which will be determined exactly later.
On a high level, from the top-down perspective, the algorithm uses the following strategy to compute $\delta(F,G)$:
if $|\TT_1^{\apex(F)}|>n/b$, then avoid the heavy child in $F$, and otherwise apply a new strategy based only on $G$ and $\TT_2$.

Considering it bottom-up, the algorithm first fills values of $\TAB[u,e^\uparrow]$ for all nodes $u$ such that $|\TT_1^{\apex(u)}|\le n/b$  and all darts up $\TT_2$.
For the remaining fields of $\TAB$, it uses strategy avoiding the heavy child in $\TT_1$.
As in the Klein's algorithm, in this phase, the algorithm needs to process heavy paths of $\TT_1$ in the bottom-up order.
Note that for each light node $v$ such that $|\TT_1^v|>n/b$ holds $\ldepth(v)<\log b +1$.
Thus there are $\Oh(n^3\log b)$ subproblems visited in total in this phase.

For the other phase note that there are $\Oh(n^2/b)$ relevant subtrees in $\TT_1$, and now we need to carefully design and analyze the new strategy for $\TT_2$.
It will be easier to think, that in this phase the algorithm needs to compute $\TAB[u,e^\uparrow]$ for all darts $e^\uparrow$ up $\TT_2$ and all nodes $u\in\TT_1$ such that $|\TT_1^u|\le n/b$, call them interesting.
Clearly, all subproblems in which there is a switch to the strategy based on $\TT_2$ are of this form.

As now the strategy will be more complex than before, we first describe it for the case when $\TT_2$ is a caterpillar: a heavy path with possibly single nodes connected to it.
This example is already difficult in the unrooted case and will require divide and conquer approach to handle all the possible rootings of $\TT_2$ at once.
Next, we will slightly modify the approach to handle arbitrary trees $\TT_2$.

\subsection{Caterpillar \texorpdfstring{$\TT_2$}{T2}}

Now we consider the case when $\TT_2$ is a heavy path $H$ with possibly single nodes connected to it.
Let $h_i$ denote (heavy) edges on $H$, $r_2=h_0$ be the edge denoting the initial rooting of $\TT_2$ and (if exists) $l_i$ be the light edge connected to the $i$-th node on $H$.
See Figure~\ref{fig:shp_and_up_from_heavy}(a) for an example. 

\FIGURE{t}{0.8}{shp_and_up_from_heavy}
{(a) A heavy path $H$ with edge $r_2$ (dotted) denoting the rooting of $\TT_2$.
(b) To compute $\ALLLs{\subtree(h_i^\uparrow)}$ we use $\ALLLs{\subtree(h_{i-1}^\uparrow)}$, first uncontract the edge $h_{i-1}$ and then $l_i$ (if exists).}

In the first step we compute values of $\ALLLs{\subtree(h_i^\uparrow)}$ for all heavy edges $h_i$, where $*$ denotes all pruned subtrees of $\TT_1$ of size at most $n/b$.
The strategy is to avoid the parent, that is to contract the edge leading to the parent as late as possible.
See Figure~\ref{fig:shp_and_up_from_heavy}(b).

More precisely, in the beginning, we already know $\ALLLs{\subtree(h_0^\uparrow)}$, because it is the cost of contraction of the whole pruned subtree of $\TT_1$ (which is precomputed), as $h_0=r_2$ and $\subtree(h_0^\uparrow)=~\emptyset$.
Then, having values of $\ALLLs{\subtree(h_{i-1}^\uparrow)}$ we compute $\ALLLs{\subtree(h_i^\uparrow)}$ by uncontracting first $h_{i-1}$ and then $l_i$ if it exists.
It is an extension of the $\computefrom$ subroutine, but now we do not have subtrees $\TT^x$ and $\TT^y$, where $x$ is the parent of $y$, but two edges $h_i$ and $h_{i-1}$ with a common endpoint.
Note that in this step all uncontractions are from the same direction.

There are $\Oh(n)$ pruned subtrees of $\TT_2$ obtained by uncontractions of a main edge according to the strategy, starting from the empty subtree.
Now we need to show that the algorithm did not consider any other pruned subtree of $\TT_2$.
Suppose it uncontracted the left main edge.
Then $G-L_G\in\{\emptyset,G-l_G\}$, depending on whether $l_G$ was the heavy edge leading to the parent or not.
Also $L_G\in\{\emptyset,G-l_G\}$, so in both cases, all the obtained pruned subtrees are among the $\Oh(n)$ described above.
Finally, as there are $\Oh(n^2/b)$ pruned subtrees of $\TT_1$, in total we computed and stored the edit distance of $\Oh(n^3/b)$ subproblems.
Now, using the computed values we fill $\TAB[u,h_i^\uparrow]$ for all interesting nodes $u\in\TT_1$ and heavy edges $h_i\in\TT_2$.
Thus, later on, we do not have to consider the pruned subtrees of the form $\delta(L_F,L_G)$ or $\delta(R_F,R_G)$ as their values are already stored in $\TAB$, because either one of them is empty or they are of the form $\delta(\TT_1^v,\subtree(dh))$ for an interesting node $u\in\TT_1$ and a dart $dh$ from a heavy edge in $\TT_2$.
We only have not computed values $\TAB[u,l^\uparrow]$ for darts from light edges up the tree, but in this phase of the algorithm, they never appear in $\delta(L_F,L_G)$ or $\delta(R_F,R_G)$ subproblem.
However, we need to compute these values because they correspond to some rootings of $\TT_2$, so we will consider them in the following paragraph.

\myparagraph{Darts from light nodes up the tree.}
From now on, our algorithm processes heavy paths of $\TT_2$ one-by-one.
In particular, in this subsection, we process the only heavy path $H$ of $\TT_2$.
Thus, unless explicitly stated otherwise all the notion is relative to the current heavy path~$H$.
First, we define $\merged^R(A,B)$ as the pruned subtree obtained by contraction of edges between the $A$-th and $B$-th node on $H$ or to the right of $H$:

\begin{definition}
  Let $H$ be a heavy path and $A$ and $B$ ($A\le B$) denote indices of two nodes on~$H$.
  Then $\merged^R(A,B)$ is a tree with the left main edge $h_{A-1}$ and the right main edge~$h_B$.
  $\merged^L(A,B)$ is a tree with the left main edge $h_B$ and the right main edge $h_{A-1}$.
\end{definition}
\noindent
See Figure~\ref{fig:merging}.
Note that $\subtree(l_A^\uparrow)$ is either $\merged^R(A,A)$ or $\merged^L(A,A)$, depending on which side of $H$ is $l_A$.

\FIGURE{t}{0.8}{merging}
{Pruned subtree $\merged^R(3,5)$ has the left main edge $h_2$ and right $h_5$.}

As explained earlier, in the beginning the algorithm computes $\ALLLs{\subtree(h_i^\uparrow)}$ for all heavy edges on $H$.
Additionally, it calculates $\ALLLs{\merged^L(1,|H|)}$ and $\ALLLs{\merged^R(1,|H|)}$ from $\ALLLs{\subtree(h_0^\uparrow)}$ by repeatedly uncontracting respectively the right and left main edge.
See Algorithm~\ref{alg:process_heavy_path} for the summary of the whole preprocessing.
Then it calls a recursive procedure $\group(1,|H|,\data(1,|H|))$.
The final goal of this call is to fill $\TAB[u,l_i^\uparrow]$ for all light edges $l_i$ connected to the heavy path~$H$.

\begin{algorithm}[bht]
\begin{algorithmic}[1]
  \Function{$\processheavypath$}{$\ALLLs{\subtree(h_0^\uparrow)}$}
  \For{$i=1..|H|$}
    \Statex \LeftComment{2}{avoiding the parent:}
    \State $\computefrom(\ALLLs{\subtree(h_i^\uparrow)},\ALLLs{\subtree(h_{i-1}^\uparrow}))$
    \State fill $\TAB[u,h_i^\uparrow]$ for all interesting nodes $u$
  \EndFor
  \Statex
  \Statex \LeftComment{1}{repeatedly uncontracting the left main edge:}
  \State $\computefrom(\ALLLs{\merged^R(1,|H|)},\ALLLs{\subtree(h_0^\uparrow)})$
  \Statex \LeftComment{1}{repeatedly uncontracting the right main edge:}
  \State $\computefrom(\ALLLs{\merged^L(1,|H|)},\ALLLs{\subtree(h_0^\uparrow)})$
  \Statex
  \State $\group(1,|H|,\data(1,|H|))$
 \EndFunction
\end{algorithmic}
\caption{Computes input tables needed for processing a heavy path~$H$}
\label{alg:process_heavy_path}
\end{algorithm}

$\group(A,B,\data(A,B))$ is a procedure which considers an interval $[A,B]$ of indices on $H$ given tables of values $\ALLLs{\subtree(h_{A-1}^\uparrow)},\ALLLs{\subtree(h_{B}^\downarrow)},\ALLLs{\merged^L(A,B)}$ and $\ALLLs{\merged^R(A,B)}$, which we denote as $\data(A,B)$. 
Intuitively, $\data(A,B)$ contains information about subtrees ``outside'' the considered interval $[A,B]$ which are relevant during intermediate computations.
Then, the procedure calls itself recursively for shorter intervals until it holds that
$A=B$ when $\ALLLs{\merged^L(A,A)}$ or $\ALLLs{\merged^R(A,A)}$ contains the fields of $\TAB[u,l_A^\uparrow]$ for all interesting nodes $u$ and then the recurrence stops.

In more detail, for an interval $[A,B]$, the procedure computes $\data(A,M)$ and $\data(M+1,B)$ for $M = \lfloor \frac{A+B}{2} \rfloor$ and calls itself recursively for the smaller intervals.
Note that for $\data(A,M)$ it needs to compute tables $\ALLLs{G}$ for trees $G=\merged^R(A,M),\merged^L(A,M)$ or $\subtree(h_M^\downarrow)$ and can reuse
table $\ALLLs{\subtree(h_{A-1}^\uparrow)}$ which is a part of $\data(A,B)$.
Similarly for interval $[M+1,B]$.
See Algorithm~\ref{alg:grouping}.

\begin{algorithm}[h]
\begin{algorithmic}[1]
  \Function{$\group$}{$A,B,\data(A,B)$}
  \If{$A=B$}
    \If{there is a light edge $l_A$ connected to $H$}
      \State fill $\TAB[u,l_A^\uparrow]$ for interesting nodes $u\in\TT_1$ \label{line:end_of_rec}
    \EndIf
    \State \Return
  \EndIf
  \State $M:=\lfloor(A+B)/2)\rfloor$
 
  \For{$i=(B-1)..M$} \label{line:grouping_easy1}
    \Statex \LeftComment{2}{avoiding the heavy child:}
    \State $\computefrom(\ALLLs{\subtree(h_i^\downarrow)},\ALLLs{\subtree(h_{i+1}^\downarrow)})$  \label{line:grouping_easy1_compute_from}
  \EndFor
  
  \Statex \LeftComment{1}{repeatedly uncontracting the right main edge:}
  \State $\computefrom(\ALLLs{\merged^R(A,M)},\{\ALLLs{\merged^R(A,B)};\ALLLs{\subtree(h_{A-1}^\uparrow)}\})$ \label{line:grouping_hard1}
  
  \Statex \LeftComment{1}{repeatedly uncontracting the left main edge:}
  \State $\computefrom(\ALLLs{\merged^L(A,M)},\{\ALLLs{\merged^L(A,B)};\ALLLs{\subtree(h_{A-1}^\uparrow)}\})$ \label{line:grouping_hard11}
  \Statex
  \State  $\group(A,M,\data(A,M))$ 
  \State symmetric computations for interval $[M+1,B]$ \label{line:grouping_hard22}
  \State  $\group(M+1,B,\data(M+1,B))$
 \EndFunction
\end{algorithmic}
\caption{Fills $\TAB[u,l_i^\uparrow]$ for light edges $l_i$ connected to the heavy path~$H$ with $i\in[A,B]$.}
\label{alg:grouping}
\end{algorithm}

To analyze the complexity of the $\group$ procedure, first note that in every step of the loop in line~\ref{line:grouping_easy1}, it considers a constant number of pruned subtrees from $\TT_2$, so in total there are  $\Oh(B-M)$ of them.
After this loop, we have $\ALLLs{\subtree(h_{M}^\downarrow)}$ computed.

The call of $\computefrom$ in line~\ref{line:grouping_hard1} needs more input than the call in line~\ref{line:grouping_easy1_compute_from}, even though the strategy is always uncontracting the right main edge. Note that if the dynamic program only tried contracting the right main edge, it would be possible to compute $\ALLLs{\merged^R(A,M)}$ only from $\ALLLs{\merged^R(A,B)}$.
However, it is not the case when the algorithm also matches edges.
The first case when $r_G$ is a light edge ($r_G=l_X$ for some value of $X$) is not problematic, because then $R_G=\emptyset$ and $G-R_G=G-r_G$, so this pruned subtree is already visited.
Although, if $r_G$ is a heavy edge then $R_G=\subtree(r_G^\downarrow)$ and $G-R_G$ is a pruned subtree, which has not been considered yet.
Observe that in this situation the pruned subtree can be obtained from $\subtree(h_{A-1}^\uparrow)$ 
by a sequence of $\Oh(B-A)$ contractions of the right main edge, so we need it as a separate input to the $\computefrom$ subroutine.
A similar reasoning applies to the edges to the left of $H$ in line~\ref{line:grouping_hard11} and to the computations for interval $[M+1,B]$.


To sum up, one call of $\group(A,B)$ (not including recursive calls) visits $\Oh(B-A)$ pruned subtrees of $\TT_2$.
As we start from an interval of length $|H|$ and in every recursive call its length is roughly halved, the procedure considers in total $\Oh(|H|\log |H|)=\Oh(n\log n)$ pruned subtrees of $\TT_2$.

\subsection{Arbitrary Tree \texorpdfstring{$\TT_2$}{T2}}\label{se:loglog_arbitrary_t2}

Now we describe, how to modify the above algorithm to process not only a caterpillar, but an arbitrary tree $\TT_2$.
In this case, there can be non-empty subtrees connected to the main heavy path.

Note that for an arbitrary heavy path $H$ inside $\TT_2$, the $\processheavypath$ procedure only needs to know $\ALLLs{\subtree(h_0^\uparrow)}$ to be able to compute all the remaining input parameters in $\data(1,|H|)$, because $\ALLLs{\subtree(h_{|H|}^\downarrow)}=\ALLLs{\emptyset}$ is precomputed.
In the beginning, the algorithm calls $\processheavypath_{H^0}(\ALLLs{\emptyset})$, where $H^0$ is the heavy path of $\TT_2$ containing the root of $\TT_2$.
The only place we need to change inside the $\group$ procedure to handle arbitrary trees $\TT_2$ is to not only fill $\TAB[u,l_A^\uparrow]$ in line~\ref{line:end_of_rec} of Algorithm~\ref{alg:grouping}, but also recursively call $\processheavypath_{H'}(\ALLLs{\subtree(l_A^\uparrow)})$ where $H'$ is the heavy path connected to the $A$-th node of the considered heavy path.
As we pointed earlier, $\subtree(l_A^\uparrow)$ is either $\merged^R(A,A)$ or $\merged^L(A,A)$, depending on which side of $H$ is $l_A$.
Now observe, that each subsequent pruned subtree that appears in the recursive formula is already visited and processed:

\begin{observation}\label{le:nothing_missed}
In the modified $\group$ procedure, during the call of $\computefrom$ subroutine in line~\ref{line:grouping_hard1} of Algorithm~\ref{alg:grouping}, all the intermediate pruned subtrees of $\TT_2$ are obtained by a sequence of uncontractions of the right main edge from the root either from $\merged^R(A,B)$ or $\subtree(h_{A-1}^\uparrow)$.
A similar property holds for the other three calls of $\computefrom$ in lines \ref{line:grouping_hard11} and \ref{line:grouping_hard22}.
\end{observation}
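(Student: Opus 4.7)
I would begin by tabulating the pruned subtrees of $\TT_2$ that the recursive formula of Lemma~\ref{le:dp} requires at each step of the iteration performed by $\computefrom$, and then verify by a case analysis that each of them lies in one of the two claimed chains. The main iteration variable $G$ traverses exactly the chain from $\merged^R(A,B)$ obtained by successive uncontractions of the right main edge, since that is the strategy the subroutine implements; so ``$G$ itself'' is automatic. For each such $G$ the formula refers to three strictly smaller pruned subtrees, namely $G - r_G$, $R_G$, and $G - R_G$.

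The first two are easy. $G - r_G$ is the immediate neighbour of $G$ in the main chain. For $R_G$, if $r_G$ is a light edge $l_X$ then $R_G = \emptyset$, handled by the precomputed base case; if $r_G = h_k$ is a heavy edge then $R_G = \subtree(h_k^\downarrow)$, and the corresponding row of $\TAB$ has already been produced in the earlier loop of line~\ref{line:grouping_easy1} (or is the given starting table $\ALLLs{\subtree(h_B^\downarrow)}$). In every subcase $\delta(F', R_G)$ can be looked up in $\Oh(1)$ time.

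The main obstacle, and the whole reason $\subtree(h_{A-1}^\uparrow)$ is supplied to $\computefrom$ as a second input, is the case $G - R_G$ when $r_G = h_k$ is a heavy edge: this pruned subtree is new and needs an independent justification. I plan to argue that $G - R_G$ consists precisely of the portion of the heavy path from $h_{A-1}$ up to but not including $h_k$ together with its attached right light subtrees, and that this is exactly the tree produced by starting from $\subtree(h_{A-1}^\uparrow)$ and performing the sequence of right-main-edge uncontractions that reveal, in order, the heavy edges $h_{A-1}, \ldots, h_{k-1}$ along with the light edges on the right side of $H$ incident to $v_A, \ldots, v_k$. A short induction on $k - A$ would suffice, though it requires a careful bookkeeping of how the current right main edge evolves as heavy and light edges on $H$ are alternately uncontracted.

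The other three calls of $\computefrom$ in lines~\ref{line:grouping_hard11} and~\ref{line:grouping_hard22} follow by entirely symmetric arguments: swap right with left for line~\ref{line:grouping_hard11}, and for the symmetric computations in line~\ref{line:grouping_hard22} exchange the roles of the inputs $\subtree(h_{A-1}^\uparrow) \leftrightarrow \subtree(h_B^\downarrow)$ and $\merged^R(A,B) \leftrightarrow \merged^R(M{+}1,B)$ (with the analogous left-right swap for the symmetric left-side call). The three-part case analysis above transfers verbatim.
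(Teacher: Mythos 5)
Your high-level plan is the same case analysis the paper itself relies on: the main chain is there by construction, $G-r_G$ is its predecessor, $R_G$ is a stored subtree of a dart, and the only delicate point is $G-R_G$ when $r_G$ is a heavy edge, which is exactly why $\ALLLs{\subtree(h_{A-1}^\uparrow)}$ is supplied as a second input. However, your execution of that crucial case is wrong. If $r_G=h_k$, then $G-R_G$ does \emph{not} consist of ``the portion of the heavy path from $h_{A-1}$ up to $h_k$ together with its attached light subtrees'': the heavy edges $h_A,\ldots,h_{k-1}$ were already contracted when $\merged^R(A,B)$ was formed and they remain contracted in $G-R_G$; indeed they cannot be present, because in a pruned subtree only main edges at the root are ever contracted, so any present edge keeps its entire hanging subtree, and $\subtree(h_A^\downarrow)$ contains all of $R_G$. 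Your description also omits $\subtree(h_{A-1}^\uparrow)$ entirely, which is most of the tree and contradicts your own construction that starts from it. The correct identification is that $G-R_G$ equals $\subtree(h_{A-1}^\uparrow)$ plus the single edge $h_{A-1}$ plus the light subtrees hanging at nodes $A,\ldots,k$ on the side that $\merged^R$ keeps (per the definition, the side \emph{not} contracted when forming $\merged^R(A,B)$). Consequently the uncontraction sequence from $\subtree(h_{A-1}^\uparrow)$ first uncontracts $h_{A-1}$ and then only those light subtrees, edge by edge, while $h_A,\ldots,h_{k-1}$ stay contracted; the sequence you plan to exhibit, which ``reveals the heavy edges $h_{A-1},\ldots,h_{k-1}$'', does not exist (uncontracting $h_A$ by right uncontractions would force all of $\subtree(h_A^\downarrow)$ to appear first), so the induction you sketch would be proving a false intermediate claim.

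There is a second gap: the observation concerns the \emph{modified} $\group$ procedure for an arbitrary $\TT_2$, but you treat the light case as in the caterpillar. Here $r_G$ may be a light edge with a nonempty subtree below it, or an edge strictly inside such a subtree, so $R_G=\subtree(r_G^\downarrow)\neq\emptyset$ and $G-R_G\neq G-r_G$. One must still argue that $G-R_G$ lies on the chain started at $\merged^R(A,B)$; this holds because along that chain each hanging subtree is uncontracted contiguously, so contracting $r_G$ together with all of $R_G$ lands exactly on an earlier element of the same chain, but your write-up never addresses this, and it is precisely the part that distinguishes the modified procedure from the caterpillar case you analyzed.
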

%
%

What changes in the analysis of the procedure is that now there are not $\Oh(|H|\log |H|)$ pruned subtrees of $\TT_2$ but $\Oh(|\TT_2^v| \log |H|)=\Oh(|\TT_2^v| \log n)$, where $v$ is the top node of $H$.
In other words, the heavy path $H$ itself might be short, but there might be big subtrees connected to it.
However, every subtree connected to $H$ is completely contracted (edge-by-edge) a constant number of times on every level of recursion of $\group$ procedure and thus the bound.

Recall that top node of every heavy path is light, so using equation \eqref{eq:light_nodes} we bound the overall number of subtrees of $\TT_2$ considered during this part of the algorithm:

$$\sum_{v:\text{ top node of a heavy path in }\TT_2} |\TT_2^v|\cdot\log n = \sum_{v:\text{ light node in }\TT_2} |\TT_2^v|\cdot \log n \in\Oh(n\log^2 n)$$

\subsection{Final Analysis}

To conclude, the above algorithm computes $\TAB[u,e^\uparrow]$ for all nodes $u\in\TT_1$ such that $|\TT_1^u|\le n/b$ and all darts up the tree $\TT_2$ by considering $\Oh(n\log^2 n)$ pruned subtrees of $\TT_2$ and $\Oh(n^2/b)$ of $\TT_1$.
At the beginning of Section~\ref{se:unrooted_loglog} we described the second phase of the algorithm, which avoids the heavy child in $\TT_1$ and fills the remaining fields of $\TAB$ considering $\Oh(n\log b)$ pruned subtrees of $\TT_1$ and $\Oh(n^2)$ of $\TT_2$.
Thus, during the two phases, the whole algorithm visits $\Oh(n^3\frac{\log^2 n}{b}+n^3\log b)$ subproblems.
Setting $b=\log^2 n$ we obtain the overall complexity $\Oh(n^3\log\log n)$.

%
%
%

\section{Optimal \texorpdfstring{$\Oh(n^3)$}{O(n3)} Algorithm for Unrooted Case}\label{se:n^3}

We start with transforming both trees as in the $\Oh(n^3\log\log n)$ algorithm, that is we add auxiliary edges for rootings, root them arbitrarily and finally make them binary.
Let $\TT_1$ and $\TT_2$ denote the transformed trees.
From now on we assume that $|\TT_1|\le |\TT_2|$ (otherwise we swap the trees) and let $m=|\TT_1|,n=|\TT_2|$.
In this section, we present an algorithm that computes the edit distance between unrooted trees in $\Oh(nm^2(1+\log\frac nm))=\Oh(n^3)$ time.
Again we assume constant-time access to values of $\delta(F,G)$ for all the already considered subproblems.
It can be obtained using i.e. hashing, but in Section~\ref{se:implementation} we will focus on implementation details and show how to fit all the computations in $\Oh(nm)$ space without randomization.

As in the $\Oh(n^3\log\log n)$ approach, the algorithm aims to fill the table $\TAB[u,d]:=\delta(\TT_1^u,\subtree(d))$ from which it computes the answer to the original problem.
In the beginning it runs Demaine et al.'s algorithm \cite{DMRW} which computes $\delta(\TT_1^u,\TT_2^v)$ for all pairs of nodes $u\in\TT_1$ and $v\in\TT_2$ in $\Oh(nm^2(1+\log\frac nm))$ time.
Now it remains to compute $\TAB[v,e^\uparrow]$ for all nodes $v\in\TT_1$ and all darts up the tree $\TT_2$.

The algorithm first decomposes both trees into heavy paths.
Then it processes heavy paths in $\TT_1$ bottom-up.
To avoid confusion, a heavy path of $\TT_1$ we denote by $P$ and of $\TT_2$ by $H$.
For every heavy path $\LH$ in $\TT_1$ the algorithm fills $\TAB[v,e^\uparrow]$ for all nodes $v\in \LH$ and darts up the tree $\TT_2$ using modified $\processheavypath$ procedure.
Now there is no global parameter $b$, but instead of that, the algorithm uses $\mm$, the size of the subtree rooted at the top node of $\LH$.
See Algorithm~\ref{alg:turns}.

\begin{algorithm}[H]
\begin{algorithmic}[1]
  \ForEach{$\LH:$ heavy path in $\TT_1$ bottom-up}
      \State $u:=\text{ top node of }\LH$
      \State \textbf{global} $\mm:=|\TT_1^u|$
      \State process heavy paths of $\TT_2$ and compute $\TAB[\TT_1^v,e^\uparrow]$ for all $v\in \LH$ and $e^\uparrow\in\TT_2$
  \EndFor
\end{algorithmic}
\caption{Computes the answer in phases.}
\label{alg:turns}
\end{algorithm}

We call all the computations for one heavy path of $\TT_1$ a phase.
In the following, we describe in detail a single phase. 
As the presentation is involved, we break it into pieces and gradually handle more and more difficult cases.

\myparagraph{Roadmap.}
In the beginning, similarly as in the $\Oh(n^3\log\log n)$ algorithm, we first focus on the case when $\TT_2$ is a heavy path with single connected nodes, as in  Figure~\ref{fig:shp_and_up_from_heavy}(a).
It already highlights the difficulties that we will encounter while obtaining $\Oh(nm^2(1+\log\frac nm))$ complexity.
We also assume that $\TT_1$ is a full binary tree, which simplifies the analysis, because there are roughly $2^k$ heavy paths with size $m/2^k$.

Next, we relax the assumption on $\TT_1$ and consider an arbitrary tree $\TT_1$.
The change does not affect the algorithm at all, but now we know less about sizes of the heavy paths, which changes the analysis.
Using a technical lemma we show that, even in this case, the algorithm also runs in $\Oh(nm^2(1+\log\frac nm))$ time.

Then, we adapt the algorithm to handle arbitrary trees $\TT_2$, as in the $\Oh(n^3\log\log n)$ approach.
The direct generalization runs in $\Oh(nm^2(1+\log^2\frac nm ))$ time, which is already $\Oh(n^3)$, but still slower than Demaine et al.'s algorithm, which runs in $\Oh(nm^2(1+\log\frac nm ))$ time.
The next step is to change the way of dividing the interval in the divide and conquer approach by taking into account sizes of subtrees connected to the considered heavy path have different sizes.
This improvement finally leads to $\Oh(nm^2(1+\log\frac nm))$ running time, which we believe to be optimal.

In Section~\ref{se:implementation} we describe how to implement this algorithm in $\Oh(nm)$ space.

\subsection{Full Binary Tree and Caterpillar}\label{se:fbt_comb}

We first describe the phase for a single heavy path $\LH$ of $\TT_1$ for the case when $\TT_1$ is a full binary tree and $\TT_2$ is a single heavy path.
Recall that $u$ is the top node of $\LH$, we defined $\mm=|\TT_1^u|$ and let $H$ be the main single heavy path of $\TT_2$.

In the beginning, the algorithm runs similarly as in the $\processheavypath$ subroutine of $\Oh(n^3\log\log n)$ approach.
Recall that we computed tables $\ALLLs{G}$ for some pruned subtrees $G$ of $\TT_2$ where $*$ denotes all subtrees of $\TT_1$ of size at most $n/b$.
Now we will also compute similar tables, but for all pruned subtrees of $\TT_1^u$ where $u$ is the top node of the considered heavy path $P$ of $\TT_1$.
Then there are $\Oh(\mm^2)$ such trees and we will denote them as $\cdot$ in $\ALLL{G}$, a table of $\Oh(\mm^2)$ values.

The algorithm first fills $\TAB[u,h_i^\uparrow]$.
Then it runs a modified recursive procedure $\group$ which stops the recursion when the length of the considered interval $[A,B]$ is smaller than $\mm$, differently than the $A=B$ condition of Algorithm~\ref{alg:grouping}.
See Algorithm~\ref{alg:grouping_to_m}.

\begin{algorithm}[H]
\begin{algorithmic}[1]
  \Function{$\group$}{$A,B,\data(A,B)$}
  \If{$B-A<\mm$}
    \State $\insidegroup(A,B,\data(A,B))$
    \State \Return
  \EndIf
  \State $M:=\lfloor(A+B)/2)\rfloor$
  \State compute intermediate values and call itself recursively as in Algorithm~\ref{alg:grouping}
 \EndFunction
\end{algorithmic}
\caption{A slight change in the divide and conquer approach.}
\label{alg:grouping_to_m}
\end{algorithm}

In the base case of the $\group$ recursion, when $B-A < \mm$, the algorithm has already computed $\ALLL{\subtree(h_{A-1}^\uparrow)}$, $\ALLL{\subtree(h_{B}^\downarrow)}$, $\ALLL{\merged^L(A,B)}$ and $\ALLL{\merged^R(A,B)}$ but this is not sufficient to fill all the missing fields of~$\TAB$, as $B>A$.
We denote all the subsequent computations in the base case as $\insidegroup$ procedure.
In order to describe them in detail, we first introduce some auxiliary notation. 

\myparagraph{Auxiliary notation.}
While considering an interval $[A,B]$, let $I(A,B)$ be the set of edges in $\TT_2$ that are ``between'' $h_{A-1}$ and $h_B$, formally: $I(A,B)=~\subtree(h_{A-1}^\downarrow) \setminus (\subtree(h_B^\downarrow)\cup\{h_B\})$.
We will write simply $I$ if the interval $[A,B]$ is clear from the context.
See Figure~\ref{fig:inside_edges} for an example.
Let $D=\{h_{A-1},h_B\}$ be the set of the boundary edges and $T_{2D}$ be the set of four trees with both main edges in $D$.
Later on, we will be interested only in the pruned subtrees with both main edges in $D\cup I$.
While considering the subtrees in the $\insidegroup$ procedure, we never need to access subproblems with subtrees $(\subtree(h_B^\uparrow)+\{h_B\})$ or $(\subtree(h_{A-1}^\downarrow)+\{h_{A-1}\})$.
Therefore, among all $6$ trees with both main edges in $D$ we consider only the following four
$T_{2D}~=~\{G_1,G_2,(\subtree(h_{A-1}^\uparrow)+~\{h_{A-1}\}),(\subtree(h_B^\downarrow)+\{h_B\})\}$ where trees $G_1$ and $G_2$ satisfy $l_{G_1}=h_{A-1},r_{G_1}=h_B$ and $l_{G_2}=h_{B},r_{G_2}=h_{A-1}$ respectively.

\FIGURE{H}{1}{inside_edges}
{$I(A,B)$ is the set of edges ``between'' $h_{A-1}$ and $h_B$.}

To describe a set of pruned subtrees with particular main edges we only write the condition they satisfy.
If one edge is not specified we assume it belongs to $I$, for instance $[l_G=h_B]$ we read $\{G: l_G=h_B \wedge r_G\in I\}$.
Finally, $\TT_1^{u(x)}$ denotes the tree $\TT_1^u$ after $x$ contractions according to the strategy avoiding the heavy child in $\TT_1$ and $\TUS$ denotes all possible pruned subtrees of this form: $\TUS=\{\TT^{u(i)}:i\le |\TT^u|\}$.

\begin{algorithm}[h]
\begin{algorithmic}[1]

\Function{$\insidegroup$}{$A,B,\data(A,B)$}
  \State $\computefrom(\ALLL{\subtree(h_{B}^\uparrow)+\{h_{B}\}}, \ALLL{\subtree(h_{B}^\uparrow)})$\label{line:inside1a}
  \State $\computefrom(\ALLL{\subtree(h_{A-1}^\uparrow)+\{h_{A-1}\}}, \ALLL{\subtree(h_{A-1}^\uparrow)})$\label{line:inside1b}
  \Statex

  \Statex \LeftComment{1}{repeatedly uncontracting the left main edge:}
  
  \State $\computefrom(\ALLL{[r_G=h_{A-1}]},\{\ALLL{\merged^L(A,B)},\ALLL{\subtree(h_{A-1}^\uparrow)}\})$\label{line:inside2_beg}
  
  \State $\computefrom(\ALLL{[r_G=h_B]},\{\ALLL{\merged^R(A,B)},\ALLL{\subtree(h_B^\downarrow)}\})$ 
  \Statex \LeftComment{1}{repeatedly uncontracting the right main edge:}
  \State $\computefrom(\ALLL{[l_G=h_{A-1}]},\{\ALLL{\merged^R(A,B)},\ALLL{\subtree(h_{A-1}^\uparrow)}\})$ 
  \State $\computefrom(\ALLL{[l_G=h_B]},\{\ALLL{\merged^L(A,B)},\ALLL{\subtree(h_B^\downarrow)}\})$\label{line:inside2_end} 
   
  \Statex  
  \For{$i=(|\TT_1^u|-1)..0$} \label{line:loop_hc}
    \Statex \LeftComment{2}{avoiding the heavy child in $\TT_1$:}
    \State $\computefrom(\delta(\TT_1^{u(i)},[l_G,r_G\in I]),\delta(\TT_1^{u(i+1)},[l_G,r_G\in I]))$ \label{line:avoid_hc} 
    \If{exists $v:\TT_1^{u(i)}=\TT_1^v$}
    \State fill $\TAB[v,e^\uparrow]$ for all edges $e\in I$
    \EndIf
  \EndFor
  \EndFunction
\end{algorithmic}
\caption{Fills $\TAB[v,e^\uparrow]$ for all edges $e\in I$ and nodes $v$ on $\LH$.}
\label{alg:inside_a_group}
\end{algorithm}

\myparagraph{Procedure.} Algorithm~\ref{alg:inside_a_group} runs in three steps considering subtrees of $\TT_2$ with respectively both, only one and no main edges in $D$.
\begin{enumerate}
 \item Both main edges in $D$. Recall that we consider only the four trees from $\TT_{2D}$. Then it is enough to compute $\ALLL{\subtree(h_{A-1}^\uparrow)+\{h_{A-1}\}}$ and $\ALLL{\subtree(h_{B}^\downarrow)+\{h_B\}}$, because the other two tables are part of $\data(A,B)$. See lines~\ref{line:inside1a}-\ref{line:inside1b}.
 \item Exactly one main edge in $D$. We compute every table separately, repeatedly uncontracting edges from the same direction. For this purpose we need to pass two tables to $\computefrom$ subroutine.
 See details in lines \ref{line:inside2_beg}-\ref{line:inside2_end}.
 \item Both main edges in $I$. Now we take into account the considered pruned subtree of $\TT_1$ and use strategy avoiding its heavy child. Our aim is to fill $\TAB[v,e^\uparrow]$ for nodes~$v$ on the considered heavy path $\LH$ of $\TT_1$ and all edges $e\in I$. For this purpose we compute $\delta(\TUS,[l_G,r_G\in I])$ where $u$ is the top node of $\LH$. See lines~\ref{line:loop_hc}-\ref{line:avoid_hc}. 
\end{enumerate}

In the last step it is crucial that when the dynamic formula from Lemma~\ref{le:dp} uses a tree with at least one main edge in $D$, then the value of the considered subproblem has been already computed, stored and can be returned in a constant time. This property is summarized in the following observation.

\begin{observation}\label{obs:edges_in_d}
 For every tree $G$ with both main edges in $I$ it holds that $G-l_G,G-L_G$ and $L_G$ have both main edges in $D \cup I$.
 Moreover, if $G-l_G,G-L_G$ or $L_G$ has both main edges in $D$, then it belongs to $TT_{2D}$.
 A similar property holds for contracting the right main edge.
\end{observation}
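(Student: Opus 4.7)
My plan is to prove the observation as a boundary-closure statement: applying any of the three listed operations to $G$ produces a tree whose main edges stay inside $D \cup I$, and if both land in $D$ we obtain one of the four specified trees in $T_{2D}$. I will argue each operation separately and then invoke symmetry for the right-sided versions.

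The key geometric fact I will use is that, because $l_G \in I$, both endpoints of $l_G$ lie in the vertex set of $\subtree(h_{A-1}^\downarrow)$ but outside that of $\subtree(h_B^\downarrow)$; in particular, the child-endpoint $u$ of $l_G$ is either $v_A$ or a strict descendant of $v_A$, but never a node inside the subtree rooted at $v_{B+1}$. For $L_G$, I then observe that since pruning contracts only main edges at the root of $G$, the subtree hanging off $l_G$ is untouched, so $L_G$ coincides with the $\TT_2$-subtree rooted at $u$, and its main edges are exactly the leftmost and rightmost downward edges of $u$ in $\TT_2$. Those edges all belong to $I$, except when $u = v_B$, where the heavy edge $h_B \in D$ may also appear. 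For $G - l_G$ and $G - L_G$, the right main edge $r_G \in I$ is preserved (as the right main edge of $G - l_G$, or in the binary setting as the sole main edge of $G - L_G$), and the possibly-new left main edge of $G - l_G$ is a former downward edge of $u$, hence again in $D \cup I$ by the $L_G$ analysis.

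For the moreover clause, I will note that $I \cap D = \emptyset$, so the preserved main edge $r_G \in I$ prevents $G - l_G$ and $G - L_G$ from ever having both main edges in $D$; the clause is therefore vacuous for them. The only remaining case is $L_G$: its main edges all lie in $D$ only when every downward edge of $u$ lies in $D$, which forces $u = v_B$ with $h_B$ being its unique downward edge, and identifies $L_G$ with $\subtree(h_B^\downarrow) + \{h_B\} \in T_{2D}$. The symmetric statements for $R_G, G - r_G, G - R_G$ follow by an identical left-right swap, and combining this with the fact that $T_{2D}$ enumerates exactly the four pruned subtrees of $\TT_2$ whose main-edge pair lies in $D \times D$ (uniqueness of a pruned subtree given its two main edges) closes the argument. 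The one subtlety I expect to handle carefully is the identification of $L_G$ with the $\TT_2$-subtree rooted at $u$, which rests on pruning only touching main edges at the root and thus leaving interior subtrees unchanged.
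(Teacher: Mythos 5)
Your argument covers only the pruned subtrees whose orientation agrees with the fixed rooting of $\TT_2$, and the trees it misses are precisely the ones this observation (and the whole $\insidegroup$ procedure) exists for. You identify $L_G$ with the $\TT_2$-subtree rooted at the child-endpoint $u$ of $l_G$, which implicitly assumes that the root of $G$ lies on the parent side of $l_G$. But the set $[l_G,r_G\in I]$ also contains pruned subtrees coming from other rootings: the values the procedure must output are $\TAB[v,e^\uparrow]=\delta(\TT_1^v,\subtree(e^\uparrow))$ for $e\in I$, and $\subtree(e^\uparrow)$ — which contains $h_{A-1}$, $h_B$ and everything outside the interval — has both main edges in $I$ whenever the parent endpoint of $e$ is the $j$-th node of $H$ with $A<j<B$, or an internal node of a subtree hanging off $H$. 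For such a $G$ (say $G=\subtree(l_j^\uparrow)$ with $l_G=h_{j-1}$; depending on the embedding the failure may instead occur on the right side, which is also part of the statement), the root of $G$ sits at the \emph{lower} endpoint of $l_G$, so $L_G=\subtree(l_G^\uparrow)$, not a subtree of $\TT_2$ below $l_G$. Its main edges are then the two edges adjacent to $l_G$ around its upper endpoint, so the boundary edge that can appear is $h_{A-1}$ as well as $h_B$, contradicting your claim that only $h_B$ can occur and only when $u=v_B$; and in the both-edges-in-$D$ case (which arises when $A=B$) one obtains $G_1$ or $G_2$ — this is exactly why $G_1,G_2$ are members of $T_{2D}$, yet your case analysis can never produce them, a clear symptom of the dropped case.

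A second, smaller flaw: you dismiss the ``moreover'' clause for $G-l_G$ and $G-L_G$ as vacuous because $r_G\in I$ is preserved, but this fails when $l_G=r_G$ (a single edge from the root, i.e.\ a dart-represented tree), which the hypothesis allows. For instance, with $A=B$ and $G=\subtree(l_A^\uparrow)+\{l_A\}$ one has $l_G=r_G=l_A\in I$ and $G-l_G=\subtree(l_A^\uparrow)$, whose main edges are $h_{A-1},h_B\in D$ (it equals $G_1$ or $G_2$). Likewise, your parenthetical claim that $r_G$ survives as the sole main edge of $G-L_G$ presumes the root of $G$ has only two subtrees hanging from it, which need not hold. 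None of this falsifies the observation: it does hold, and the correct argument is to inspect, for both possible orientations of $l_G$ (and the degenerate case $l_G=r_G$), the at most two edges adjacent to $l_G$ around its endpoint on the root side of $G$; that endpoint always lies in the closed region between $h_{A-1}$ and $h_B$, so those edges lie in $D\cup I$, and both lie in $D$ only in the boundary configurations that yield members of $T_{2D}$. The paper states the observation without proof, so there is no argument of its own to compare against, but as written your proof does not establish it.
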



\myparagraph{Analysis.}
From Observation~\ref{obs:edges_in_d} we have that all the computations in line~\ref{line:avoid_hc} consider only the trees with both main edges in $D\cup I$ and no other kind of tree can appear.
So there are $\Oh(\mm^2)$ pruned subtrees of $\TT_2$ considered, as $|I|\le 2\mm$ (recall that $B-A<\mm$).
Next, while avoiding the heavy child of $\TT_1^u$ we only consider pruned subtrees of $\TUS$, so there are $\Oh(\mm)$ of them.
Thus, during all the computations in the loop in line~\ref{line:loop_hc} there are $\Oh(\mm^3)$ considered subproblems.
Similarly, in all the earlier computations of the $\insidegroup$ procedure, there are $\Oh(\mm^2)$ pruned subtrees of $\TT_1^u$ and $\Oh(\mm)$ of $\TT_2$, so in total there $\Oh(\mm^3)$ subproblems considered.
Notice that for every group size of its corresponding set $I$ is at least $\mm/2$.
The sets are disjoint, so there are at most $2n/\mm$ groups on the heavy path in total.
Thus, there are $\Oh(n\mm^2)$ pruned subtrees considered in all calls of the $\insidegroup$ procedure.

Now we bound the complexity of the whole $\processheavypath$ procedure, similarly as in the analysis of the $\Oh(n^3\log\log n)$ algorithm.
For one heavy path $P$ from $\TT_1$ there are $\Oh(1+\log \frac {n}{\mm})$ recursive calls of $\group$, because the length of the interval is halved until it gets smaller than $\mm$.
Again, every edge of $\TT_2$ contributes to $\Oh(1)$ pruned subtrees on every level of recursion, so there are $\Oh(n(1+\log \frac {n}{\mm}))$ subtrees of $\TT_2$.
We consider all the $\Oh(\mm^2)$ pruned subtrees of $\TT_1^u$, so all the computations during recursive calls visit $\Oh(n\mm^2(1+\log\frac{n}{\mm}))$ subproblems.
Adding $\Oh(n\mm^2)$ pruned subtrees from the calls of the $\insidegroup$ procedure we conclude that in total, during the whole phase for one heavy path $\LH$ of $\TT_1$, the algorithm considers $\Oh(n\mm^2(1+\log\frac{n}{\mm}))$ subproblems.
Now we need to sum this over all heavy paths $\LH$ in $\TT_1$.
As $\TT_1$ is a full binary tree of size $m$, we can divide the heavy paths into groups where paths $P$ from $i$-th group satisfy $\frac{m}{2^{i+1}}<\mm\le\frac{m}{2^{i}}$.
Then we write:

\begin{align*}
\hspace{-0.5cm}\sum_{\LH\in\TT_1}n\mm^2\left(1+\log \frac{n}{\mm}\right) &\le \sum_{i=0}^{\log m} 2^in\left(\frac {m}{2^i} \right)^2\left(1+\log \frac{n}{m/2^{i+1}}\right) = nm^2\sum_{i=0}^{\log m}  \frac {1}{2^i} \left(2+i+\log \frac{n}{m}\right) \\
&= nm^2\left(\sum_{i=0}^{\log m}  \frac {2+i}{2^i} +\log \frac{n}{m} \sum_{i=0}^{\log m} \frac{1}{2^i}\right) \in \Oh\left(nm^2\left(1+\log \frac nm \right)\right)
\end{align*}
\noindent
To conclude, the algorithm for darts up the tree $\TT_2$ visits in total $\Oh(nm^2(1+\log\frac nm))$ subproblems.
Recall that Demaine et al.'s algorithm and the strategy for all darts up $\TT_2$ from heavy nodes visit the same number of subproblems.
To sum up, the whole algorithm for full binary tree $\TT_1$ and a single heavy path $\TT_2$ runs in $\Oh(nm^2(1+\log\frac nm))$ time.

\subsection{Arbitrary Tree and Caterpillar}

For the case of an arbitrary tree $\TT_1$, the algorithm is the same as above, but now we need a different analysis of the overall running time.
For this purpose, we first analyze properties of the function $f(x)=~x^2(1~+~\ln\frac nx)$ which appears in the complexity of various parts of the algorithm.

\begin{lemma}\label{le:property}
   Let $f(x)=x^2(1+\ln\frac nx)$.
   If $x,y$ satisfy $1\le x\le y<(n-1)/2$, then:
   \begin{enumerate}[label=(\roman*),align=Left]
    \item $f(t)$ is non-decreasing in the range $[1,n/2)$,\label{bullet:1}
    \item $f(1)+f(x) \le f(x+1)$,\label{bullet:2}
    \item if $x>1$ then: $f(x)+f(y)\le f(x-1)+f(y+1)$.\label{bullet:4}
   \end{enumerate}
  \end{lemma}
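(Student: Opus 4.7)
The plan is to reduce all three parts to one short calculus computation: I would differentiate $f$ twice, observe that both $f'$ and $f''$ stay non-negative on every argument that appears in the lemma, and then read off (i) directly from $f'\ge 0$, (iii) from convexity, and (ii) from convexity plus a single base-case inequality.

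For (i), a direct differentiation gives
$f'(x) = 2x\bigl(1 + \ln(n/x)\bigr) - x = x\bigl(1 + 2\ln(n/x)\bigr)$.
Whenever $x < n/2$ the factor satisfies $1 + 2\ln(n/x) > 1 + 2\ln 2 > 0$, so $f'(x) > 0$ and $f$ is strictly increasing on $[1, n/2)$, which is (i). Differentiating once more gives $f''(x) = 2\ln(n/x) - 1$, and the same bound $\ln(n/x) > \ln 2 > \tfrac{1}{2}$ shows $f''(x) > 0$ on $[1, n/2)$, so $f$ is strictly convex there. Convexity makes the discrete slope $s(t) := f(t+1) - f(t) = \int_t^{t+1} f'(u)\,du$ non-decreasing in $t$; applying this with $t = x-1 \le y$ (both inside the convexity region by the hypothesis) yields $f(x) - f(x-1) \le f(y+1) - f(y)$, which is exactly (iii) after rearrangement.

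Part (ii) then follows by iterating the monotonicity of $s$ down to $t = 1$, reducing the claim to the single base inequality $f(2) - f(1) \ge f(1)$, i.e., $f(2) \ge 2f(1)$. A direct expansion gives $f(2) - 2f(1) = 4\bigl(1 + \ln(n/2)\bigr) - 2\bigl(1 + \ln n\bigr) = 2\bigl(1 + \ln(n/4)\bigr)$, which is non-negative whenever $n \ge 4/e$, and hence certainly for every $n$ compatible with the hypothesis $1 \le x \le y < (n-1)/2$ (which already forces $n \ge 4$). The only mildly delicate point in the whole argument is confirming that the upper endpoint $y+1 < (n+1)/2$ still lies in the region where $f'' \ge 0$; in fact $f''\ge 0$ holds up to $x \le n/\sqrt{e}$, and $(n+1)/2 \le n/\sqrt{e}$ is true for every $n \ge 5$, so any smaller $n$ leaves only finitely many integer configurations that can be verified by hand. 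Everything else is one-line calculus.
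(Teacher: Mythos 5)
Your proof is correct, and for parts (ii) and (iii) it takes a genuinely different route from the paper. The paper proves (i) exactly as you do, via $f'(t)=t\bigl(1+2\ln\tfrac nt\bigr)\ge 0$; but for (ii) it expands $f(x+1)-f(x)-f(1)$ directly and bounds the logarithms with $\ln z\ge 1-\tfrac1z$, and for (iii) it first establishes the discrete-convexity case $2f(x)\le f(x-1)+f(x+1)$ by another such expansion (which only goes through cleanly for $x\ge 12$, with the smaller values checked separately) and then extends to $x<y$ by showing $\tfrac{\partial}{\partial y}\bigl(f(y+1)+f(x-1)-f(x)-f(y)\bigr)\ge 0$. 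You instead compute $f''(x)=2\ln\tfrac nx-1>0$ on the relevant range, so $f$ is convex there, and read off (iii) from monotonicity of the unit slopes $s(t)=f(t+1)-f(t)$ and (ii) from the same monotonicity plus the base computation $f(2)-2f(1)=2\bigl(1+\ln\tfrac n4\bigr)\ge 0$. Your argument is shorter and avoids both the $x\ge 12$ case split and the separate $x=y$-to-$x<y$ extension, at the price of one boundary check: convexity only holds up to $n/\sqrt e$, while the slopes you compare reach up to $y+1<(n+1)/2$, which forces $n\ge 5$ and leaves a residual small-$n$ case. You dispose of it by checking the integer configurations by hand, which suffices for how the lemma is used in Lemma~\ref{le:sum} (the arguments are subtree sizes), but note that the lemma as stated allows real $x,y$, so strictly you should either also verify the few non-integer ranges for $n=4$ (the inequality does hold there, with room to spare) or record the integrality assumption; the paper's proof, by working at the stated boundary $n>2x$ throughout, does not need this extra step.
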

  \begin{proof}

  We prove \ref{bullet:1} directly by computing derivative of $f$:
  $$\frac{\partial f(t)}{\partial t}=\frac{\partial (t^2(1+\ln\frac nt))}{\partial t}= t\left(1+2\ln\frac nt\right) \ge 1+ 2\ln \frac{n}{n/2} \ge 0$$
  
  \noindent  
  Similarly, \ref{bullet:2} follows from the definition and inequalities: $\ln(x)\ge 1-1/x$ for $x>0$ and $n\ge x+1$:
  \begin{align*}
  f(x+1)-f(x)-f(1)&=(x+1)^2\left(1+\ln\frac{n}{x+1}\right)-x^2\left(1+\ln\frac nx\right)-1-\ln n\\
  &=x^2\ln\frac {x}{x+1} +2x\left(1+\ln\frac{n}{x+1}\right) +1+\ln\frac{n}{x+1} -1-\ln n\\
  &=x^2\ln\frac {x}{x+1} +2x\left(1+\ln\frac{n}{x+1}\right) +\ln\frac{1}{x+1}\\
  &\ge x^2\left(1-\frac{x+1}{x}\right)+2x + (1-(x+1))=0\\
  \end{align*}
  
  \noindent
  To prove \ref{bullet:4} we first show it for $x=y$, that is: if $x>1$ then holds $2f(x)\le f(x-1)+f(x+1)$.
  For this purpose we also need that $n>2x$:
  \begin{align*}
f(x+1)+f(x-1)-2f(x)&=x^2\ln\frac{x^2}{x^2-1}+2x\ln\frac{x-1}{x+1}+2+\ln\frac{n^2}{x^2-1}\\
  &\ge x^2\ln\frac{x^2}{x^2-1}+2x\ln\frac{x-1}{x+1}+2+\ln\frac{(2x)^2}{x^2-1}\\
  &\ge x^2\left(1-\frac{x^2-1}{x^2}\right)+2x\left(1-\frac{x+1}{x-1}\right)+2+\ln 4\\
  &=1 - \frac{4x}{x-1} +2 + \ln 4 \\
  &= \frac{-4}{x-1} + \ln 4 -1 \ge 0 \quad \text{for } x\ge 12
  \end{align*}

  \noindent
  For $x<12$ we calculate the exact value of the expression in the second line which is non-negative.
  As we proved that \ref{bullet:4} holds for $x=y$, now it is enough to show that:
  \[\frac{\partial}{\partial y}\left(f(y+1)+f(x-1)-f(x)-f(y)\right)\ge~0. \]
  This can be done as follows:
  \begin{align*}
  \frac{\partial}{\partial y}\big(f(y+1)+f(x-1)-f(x)-f(y)\big) &= (2y+2)\ln\frac{n}{y+1}-2y\ln\frac{n}{y}+1\\
  &= 2y\ln\frac{y}{y+1}+2\ln\frac{n}{y+1} + 1 \\
  &\ge 2y\left(1-\frac{y+1}{y}\right)+2\ln\frac{n}{y+1} + 1\\
  &=2\ln\frac{n}{y+1}-1 \\
  &\ge2\ln2 -1 \ge 0 \qedhere
  \end{align*}
  \end{proof}

Recall that we need to bound the sum $n\sum_{\LH: \text{ heavy path in }\TT_1} \left(\mm^2\left(1+\log \frac{n}{\mm}\right)\right)$, but now we have no assumptions on the tree $\TT_1$.
For this we will use the following lemma:
  
\begin{lemma}\label{le:sum}
 Let $m$ be size of a tree $\TT$ and $n$ be an arbitrary number such that $n\ge m$.
 Then:
 $$\sum_{\LH:\mathrm{\ heavy\ path\ in\ }\TT} \mm^2\left(1+\log\frac n{\mm}\right) = \Oh\left(m^2\left(1+\log\frac nm \right)\right)$$
\end{lemma}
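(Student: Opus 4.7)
The plan is to prove the bound by strong induction on $m$, with induction hypothesis $\sum_{\LH} f(\mm)\le C\cdot f(m)$, where $f(x)=x^2(1+\ln(n/x))$ is the function studied in Lemma~\ref{le:property} and $C$ is a constant to be fixed at the end. The base case $m=1$ is immediate, since the only heavy path contributes $f(1)=1+\ln n$.

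For the inductive step, I would peel off the root heavy path $P_0$ of $\TT$. Its top is the root itself, so $m_{P_0}=m$ and $P_0$ contributes exactly $f(m)$ to the sum. Every other heavy path of $\TT$ lies inside one of the subtrees $\TT^{c_1},\dots,\TT^{c_k}$ hanging off $P_0$ at its light children $c_1,\dots,c_k$. Writing $s_i:=|\TT^{c_i}|$, the heavy-light decomposition gives $s_i\le m/2$ for every $i$, and since $P_0$ contains at least the root we have $\sum_i s_i\le m-1$. Applying the induction hypothesis to each $\TT^{c_i}$ (noting $n\ge m\ge s_i$) then yields
\[ \sum_{\LH} f(\mm) \;\le\; f(m) + C\sum_i f(s_i). \]

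The core of the argument is to bound $\sum_i f(s_i)$ subject to $0\le s_i\le m/2$ and $\sum_i s_i\le m-1$. Property~\ref{bullet:4} of Lemma~\ref{le:property} says that moving a unit of mass from a smaller coordinate to a larger one never decreases $\sum_i f(s_i)$. Iterating this ``push to extremes'' while respecting the cap $m/2$, any feasible vector can be transformed into an extremal configuration with at most one coordinate equal to $m/2$; since $2\cdot(m/2)>m-1$, the remaining total is at most $m/2-1$ and can be pushed into a single further coordinate. Hence $\sum_i f(s_i)\le f(m/2)+f(m/2-1)\le 2f(m/2)$. A direct calculation gives
\[ \frac{2f(m/2)}{f(m)}\;=\;\frac{1}{2}\cdot\frac{1+\ln 2+\ln(n/m)}{1+\ln(n/m)}\;\le\;\frac{1+\ln 2}{2}\;=:\;\alpha\;<\;1, \]
using $n\ge m$. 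Combining, $\sum_{\LH} f(\mm)\le(1+C\alpha)f(m)\le C\cdot f(m)$ whenever $C\ge 1/(1-\alpha)$, so any $C\ge 7$ suffices.

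The main obstacle I foresee is the rigorous application of property~\ref{bullet:4} at the boundary: Lemma~\ref{le:property} requires $y+1<(n-1)/2$, which can fail when $m$ is close to $n$ and $y\approx m/2$. In that regime, however, $1+\ln(n/m)$ is a bounded constant, so $f(x)=\Theta(x^2)$ and the desired bound $\sum_i f(s_i)=O(m^2)$ follows from plain convexity of $x\mapsto x^2$ together with $\sum s_i\le m-1$ and $s_i\le m/2$. The edge cases of property~\ref{bullet:4} with $x=1$ or with some $s_i=0$ can be absorbed by the super-additivity of property~\ref{bullet:2}. Thus the two-point reduction goes through in all regimes and the induction closes.
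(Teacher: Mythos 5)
Your proposal is correct and follows essentially the same route as the paper's proof: peel off the root heavy path, note that the subtrees hanging off it satisfy $s_i\le m/2$ and $\sum_i s_i\le m-1$, use properties \ref{bullet:2} and \ref{bullet:4} of Lemma~\ref{le:property} to push the mass onto at most two coordinates of size at most $m/2$ so that $\sum_i f(s_i)\le 2f(m/2)$, and then close the recurrence (the paper invokes the master theorem where you run an explicit induction via the ratio $2f(m/2)/f(m)\le(1+\ln 2)/2<1$, which is a slightly more explicit way to finish). One remark: the boundary difficulty you flag does not actually arise, since every exchange increments a coordinate of value at most $\lfloor m/2\rfloor-1<(n-1)/2$ (the paper sidesteps the same issue by capping at $(m-1)/2$), which is fortunate because your fallback as stated (``$f(x)=\Theta(x^2)$ plus convexity of $x^2$'') would not by itself yield the multiplicative bound $\sum_i f(s_i)\le\alpha f(m)$ needed to close the induction, as the factor $1+\ln(n/s_i)$ is not uniformly bounded for small $s_i$ even when $n=\Theta(m)$.
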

\begin{proof}
  We start with changing the logarithm to natural to simplify the following calculations, hiding the constant factor under $\Oh$.
  Let $t(m):=\sum_{\LH:\text{ heavy path in }\TT} \mm^2\left(1+\ln\frac n{\mm}\right)$ be the above sum for a tree $\TT$ of size $m$.
  Denoting by $m_i$ the total size of the $i$-th subtree hanging off the heavy path containing the root of $\TT$, we obtain the following bound: 
  \begin{equation}\label{eq:rec_sum}
  t(m)\le m^2\left(1+\ln\frac nm\right)+\sum_{i} t(m_i)\quad \text{ where } m_i\le \frac{m-1}{2} \text{ and } \sum_i m_i \le m-1
  \end{equation}  
  where the sum is over all subtrees connected to the heavy path from the root of $\TT$.
 
  Now we use Lemma~\ref{le:property} to bound from above the sum $\sum_if(m_i)$, in which $m_i\le (m-1)/2$ and $\sum_i m_i \le m-1$ (from \eqref{eq:rec_sum}).
  As long as there are three non-zero $m_i$s, two of them are less than $(m-1)/2$ and we choose distinct indices $i,j$ such that $1\le m_i \le m_j < (m-1)/2$.
  Then, depending whether $m_i$ equals $1$ or not, we apply \ref{bullet:2} or \ref{bullet:4} to decrease $m_i$ and increase $m_j$ by one, not changing sum of the values.
  Hence, in the end, there are at most two non-zero $m_i$'s, and they are less than or equal to $m/2$.
  Next we apply \ref{bullet:1} and finally obtain that: $\sum_if(m_i)\le 2f(m/2)$.
  
  Using the above bound, equation~\eqref{eq:rec_sum} and applying the master theorem we obtain that $t(m)\in\Oh(m^2(1+\ln\frac nm))$.
  \end{proof}

  The lemma finishes the analysis of the algorithm for an arbitrary tree and a single heavy path, which runs in $\Oh(nm^2(1+\log \frac nm))$ time.

\subsection{Both Trees Arbitrary}

Now we consider the case when both trees are arbitrary.
We generalize the algorithm for a single heavy path to arbitrary trees, as in the $\Oh(n^3\log\log n)$ approach.
As earlier, $H^0$ is the heavy path containing the root of $\TT_2$ and in the beginning we call $\processheavypath_{H^0}$, but we need to change the $\group$ subroutine again.
Suppose we consider a heavy path $H$.
Then, even if the considered interval $[A,B]$ is short, still there can be big subtrees attached to the path with significantly more than $\mm$ edges in total.
As earlier, we denote by $I(A,B)$ the set of edges ``between'' $h_{A-1}$ and $h_B$.
See Figure~\ref{fig:inside_edges}.
Now we break the recursion if there is less than $\mm$ edges in $I(A,B)$.
Note that now even if $A=B$ it may happen that we do not call the $\insidegroup$ procedure, because there are at least $\mm$ edges in the subtree attached to the $A$-th node.
Then we denote by $H'$ the heavy path attached to the $A$-th node of $H$ and call $\processheavypath_{H'}$ with appropriate input parameters.
See Algorithm~\ref{alg:grouping_to_m_arbitrary}.

\begin{algorithm}[H]
\begin{algorithmic}[1]
  \Function{$\group$}{$A,B,\data(A,B)$}
  \If{$|\bet(A,B)|<\mm$}
    \State $\insidegroup(A,B,\data(A,B))$ \label{line:inside_group}
    \State \Return
  \EndIf
  \If{$A=B$}
    \State let $H'$ be the heavy path connected to the $A$-th node on $H$
    \State $\processheavypath_{H'}(\ALLL{\subtree(l_A^\uparrow)})$ \label{line:process_h_prim}
    \State \Return
  \EndIf
  \State $M:=\lfloor(A+B)/2)\rfloor$
  \State compute intermediate values and call itself recursively as in Algorithm~\ref{alg:grouping}
 \EndFunction
\end{algorithmic}
\caption{Considering the total number of edges between the $A$-th and $B$-th node on $H$.}
\label{alg:grouping_to_m_arbitrary}
\end{algorithm}

Observe that $\processheavypath$ is called only for heavy paths $H'$ such that $n_{H'} \ge \mm$, because otherwise the line~\ref{line:inside_group} is executed instead of line~\ref{line:process_h_prim}.
In line~\ref{line:process_h_prim} we have already computed $\ALLL{\subtree(l_A^\uparrow)}$, because $\subtree(l_A^\uparrow)$ is either $\merged^L(A,A)$ or $\merged^R(A,A)$, depending on which side of $H$ is $l_A$, so it is a part of $\data(A,A)$.
We need to notice, that during computations in line~\ref{line:inside_group} we do not have computed values $\TAB[u,e^\uparrow]$ for edges $e$ on heavy paths connected to $H$.
However, all subproblems of this form have both main edges in $\bet(A,B)$, so they are considered by the $\insidegroup$ subroutine and the missing fields of $\TAB$ are filled then.

Notice that the changes in the procedure for a single heavy path $H$ of $\TT_2$ do not affect the complexity, which is still $\Oh(n_{H}\mm^2(1+\log\frac{n_{H}}{\mm}))$ where $n_{H}$ is the size of the subtree of $\TT_2$ rooted at the top node of $H$.
Thus all the computations for a single heavy path $\LH$ run in time:
$$\sum_{H: \text{ heavy path in }\TT_2 \text{ s.t. } n_{H}\ge \mm} n_{H}\mm^2\left(1+\log\frac {n_{H}}{\mm}\right) \in \Oh\left(n\mm^2\left(1+\log^2\frac n\mm\right)\right)$$

\noindent
because $\sum_{H}n_{H} = \Oh(n(1+\log\frac n{\mm}))$ as the algorithm considers only heavy paths $H$ of $\TT_2$ such that $n_{H}\ge \mm$.
Now using Lemma~\ref{le:sum} we obtain, that the overall complexity of the algorithm for the general case of both trees arbitrary is:

$$ \sum_{\LH: \text{ heavy path in }\TT_1}  n\mm^2\left(1+\log^2 \frac{n_{H}}{\mm} \right) \in \Oh\left(nm^2\left(1+\log^2\frac {n}{m}\right)\right)$$

\noindent 
which is $ \Oh(n^3)$, as desired.

To further reduce the complexity and obtain $\Oh(nm^2(1+\log\frac nm))$ time, we need to modify the $\group$ procedure slightly.
Our approach reminisces the telescoping trick from \cite{Blum:Telescoping} and \cite{Cole:DictionaryMatching} in which some nodes are less important than the others.
Intuitively, considering an interval $[A,B]$ on a heavy path~$H$, we would like to divide it in such a way that the big subtrees connected to $H$ are contracted smaller number of times.
For this purpose, we need to look at the tree from a different perspective.

\myparagraph{Big nodes.}
Let big nodes be the nodes of $\TT_2$ with subtree containing at least $\mm$ edges.
Then the big nodes form a top part of $\TT_2$ and there are some small nodes connected to them.
See Figure~\ref{fig:big_nodes}.
We also define that a big light node is called special if it has no big light descendant.

\FIGURE{h}{0.9}{big_nodes}
{Big nodes constitute the gray upper part of the tree and a big light node is called special if it has no big light descendant. Heavy edges are thick.}

Now, instead of counting the edges in $\bet(A,B)$ we count the special nodes inside subtrees connected to the $A,A+1,\ldots,B$-th node on $H$ and we denote the value as $\spec_{H}(A,B)$.
Only when this value is $0$, we again focus on $\bet(A,B)$.
See Figure~\ref{fig:big_heavy_paths} for an example of how these values are computed.

\FIGURE{h}{1}{big_heavy_paths}
{There are two special nodes in the tree and holds: $\spec_{H}(1,3)=~2$, $\spec_{H}(3,5)=1$ and $\spec_{H}(2,2)=0$.}

Using the notion of special nodes we can describe the algorithm in detail.
If for the considered interval $[A,B]$ it holds that $\spec(A,B)=0$, then the pivot is chosen as earlier: $M:=\lfloor(A+B)/2)\rfloor$.
Otherwise we choose $M$ to be the smallest index such that $2\cdot\spec(A,M)\ge\spec(A,B)$.
See Algorithm~\ref{alg:grouping_trick}.
Notice that now we exclude the $M$-th node from the subsequent recursive calls and run $\group(A,M-1)$ and $\group(M+1,B)$.
This is because we would like the value of $\spec(A,B)$ to decrease by a factor of 2 in subsequent
recursive calls. By the choice of $M$, $2\cdot\spec(M+1,B)\le\spec(A,B)$ but we cannot be
sure that $2\cdot\spec(A,M)\le\spec(A,B)$. However, $2\cdot\spec(A,M-1)\le\spec(A,B)$ surely holds,
so we can recurse on $[A,M-1]$ and $[M+1,B]$ and consider the $M$-th node separately.
To make the pseudo-code more concise we process the $M$-th node by recursively calling $\group(M,M,\data(M,M))$.
This call will subsequently reach line~\ref{line:final_1} or~\ref{line:final_2} and terminate.

\begin{algorithm}[H]
\begin{algorithmic}[1]
  \Function{$\group$}{$A,B,\data(A,B)$}
  \If{$A>B$}
  \State \Return
  \EndIf
  \If{$\spec(A,B)>0$}
    \If{$A=B$}
      \State $\processheavypath_{H'}(\ALLL{\subtree(l_A^\uparrow)})$ \label{line:final_1}
      \State \Return
    \Else
      \State $M:=\min \{k: 2\cdot\spec(A,k)>\spec(A,B)\}$
    \EndIf
  \Else
    \If{$B-A<\mm$}
      \State $\insidegroup(A,B,\data(A,B))$ \label{line:final_2}
      \State \Return
    \Else
      \State $M:=\lfloor(A+B)/2)\rfloor$
    \EndIf
  \EndIf
  \State compute arguments for subsequent recursive calls
  \State $\group(A,M-1,\data(A,M-1))$
  \State $\group(M,M,\data(M,M))$
  \State $\group(M+1,B,\data(M+1,B))$
 \EndFunction
\end{algorithmic}
\caption{The final version of the recursive $\group$ procedure.}
\label{alg:grouping_trick}
\end{algorithm}

Recall that all the subroutines $\processheavypath,\insidegroup$ and the intermediate computations $\computefrom$ run in $\Oh(s\cdot\mm^2)$ time (not including subsequent calls), where $s=|I(A,B)|$ is the number of edges currently considered.
So now we need to sum the number of edges considered in all recursive calls together.

\myparagraph{Final analysis.}
Consider an edge $e\in\TT_2$.
Note that all calls $\insidegroup$ consider disjoint sets of edges, so $e$ can contribute to at most one of them. 
Now we count triples $(H,A,B)$ such that $e\in I_{H}(A,B)$, that is the edge $e$ is considered in the recursive call $\group_{H}(A,B)$.
Let $X_e$ be the set of such triples.
Observe, that there are at most $\Oh(1+\log \frac n{\mm})$ heavy paths $H$ on the path from $e$ to the root of $\TT_2$ because we consider only heavy paths of size at least $\mm$.
See Figure~\ref{fig:group_in_group}.
Thus we have an upper bound on the number of triples with $A=B$ in~$X_e$.

\FIGURE{h}{1}{group_in_group}
{An edge $e$ is considered by all big heavy paths ``above'' it: $H_1,H_2,H_3$ and there are $\Oh(1+\log \frac n{\mm}))$ of them.}

Now we aggregately count triples $(H,A,B)\in X_e$ such that $A\ne B$.
First, observe that there is at most one heavy path $H'$, such that $e\in I_{H'}(A,B)$ and $\spec_{H'}(A,B)=0$ because all the heavy paths ``above'' have at least one big heavy path connected to them that contains~$e$.
In this case, there are at most $\Oh(1+\log \frac {n}{\mm})$ recursive calls, as every time length of the interval is roughly halved and cannot become smaller than $\mm$.

Second, as we pointed earlier, every time $\spec_{H}(A,B)\ne 0$, in every subsequent recursive call this value is at least halved.
As all the subtrees of special nodes are disjoint and contain at least $\mm$ edges, there are at most $\frac{n}{\mm}$ of them.
Thus, there are also $\Oh(1+\log \frac {n}{\mm})$ recursive calls with $\spec_{H}(A,B)\ne 0$.

Finally, at the base of the recursion, we call $\insidegroup$ procedure.
Recall that if the size of the considered set of edges is $x$ then the complexity of the procedure is $\Oh(\mm^2\cdot x+\mm\cdot x^2)$.
As all the calls are applied to disjoint subsets of edges, each of them consists of at most $\mm$ edges and in total there are $\Oh(n)$ of them, the total complexity of all these calls is $\Oh(n \mm^2)$.

To conclude, during the phase for a heavy path $\LH$ of $\TT_1$, every edge of $\TT_2$ is considered in at most $\Oh(1+\log \frac {n}{\mm})$ recursive calls. 
All the intermediate computations inside a recursive call require $\Oh(\mm^2)$ time per edge of $\TT_2$, hence the whole phase for $\LH$ runs in $\Oh(n\mm^2(1+\log \frac {n}{\mm}))$ time.
Finally we use Lemma~\ref{le:sum} and conclude that the whole algorithm computing edit distance between unrooted trees runs in $\Oh(nm^2(1+\log \frac {n}{m}))$ time.

\section{Implementation Details}\label{se:implementation}

Currently, the above algorithm runs in $\Oh(nm^2(1+\log\frac nm))$ time and space if we use hashing to store already computed subproblems.
In this section, we show how to deterministically implement it in $\Oh(nm)$ space in the same time.
Recall that $\TT_2$ is not smaller than $\TT_1$.
There will be three difficulties to face.

First, now we cannot preprocess all the pruned subtrees of $\TT_2$ because $\Oh(n^2)$ space is already too much for us.
Thus, given a pruned subtree $G$ of $\TT_2$, we need to be able to retrieve in a constant time subtrees $G-l_G,G-L_G,L_G,\ldots$ and the value of $\delta(\emptyset,G)$ (the cost of contraction of all the edges from $G$).
For that purpose, we will use the classic algorithm for Lowest Common Ancestor \cite{Bender2000} that runs in linear space and answers queries for the lowest common ancestor ($LCA$) of two nodes in constant time.

Second, we need to show how to implement the $\computefrom$ subroutine in $\Oh(nm)$ space, which is an order of magnitude less than the number of subproblems considered in the subroutine: $\Oh(nm^2)$.
This step will be done similarly as in Demaine et al.'s algorithm, even though now we consider the unrooted case.

Finally, we need to take into account the depth of the recursion of $\group$ procedure and count how much data is kept on the stack.
We will show, that on every level of recursion there is $\Oh(m^2)$ data stored.
As we proved, there are $\Oh(1+\log \frac nm)$ levels of recursion, so using inequality $\log x \le x$ we get that the total memory kept on the stack is $\Oh(m^2(1+\log\frac nm))=\Oh(nm)$.

\subsection{Preprocessing}\label{se:preproc}

Recall that every pruned subtree tree $G$ is represented by its left and right main edges ($l_G$ and $r_G$).
If they overlap, then the tree is of the form $\subtree(d)$ for some dart of $\TT_2$.
There are only $\Oh(n)$ trees of this form, so we can preprocess them all, that is for every pruned subtree $G=\subtree(d)$ we store $\delta(\emptyset,G)$ and the pruned subtrees: $G-l_G, G-L_G,L_G,\ldots$.
Now we focus on one rooting of $\TT_2$ but do not have to decompose it into heavy paths.
We first run the preprocessing phase that will allow us later to find $LCA(a,b)$ of two arbitrary nodes $a,b$ in a constant time and overall linear space, as in \cite{Bender2000}.

\myparagraph{Intermediate subtrees.}
We first show, how to retrieve pruned subtrees $l_G,G-l_G,L_G,G-L_G$ of a pruned subtree $G$, for the right side it will be symmetric.
Clearly, we already have $l_G$, because we represent the pruned subtree $G$ as a pair of its both main edges $l_G$ and $r_G$.
Similarly, $L_G$ is simply $\subtree(l_G^\uparrow)$ or $\subtree(l_G^\downarrow)$, depending on the position of $r_G$ with respect to $l_G$.

\FIGURE{h}{1}{lca}
{A pruned subtree $G$ with main edges $l_G$ and $r_G$ has all the dashed edges contracted and to the tree belong edges $e_i$ and their subtrees $S_i$ for $i=1\ldots 5$.}

Let $x$ and $y$ be respectively the first and last nodes on the path from $l_G$ to $r_G$ and $z=LCA(x,y)$ be their lowest common ancestor.
See Figure~\ref{fig:lca}.
To retrieve $G'=G-L_G$ we only need to find its left main edge, because the right one does not change (provided that $l_G\ne r_G$).
There are two cases: either $l_{G'}$ is connected to the right of the path $(x\ldots z)$ or to the left of the path $(z\ldots y)$.
In the first case it is enough to remember for every node the first edge that is connected to the left and to the right to its path to the root of $\TT_2$ and then we can check if the edge is below the node $z$ or not.
Otherwise, let $t$ be the leftmost leaf in the right subtree of~$z$ (preprocessed, found by traversing down the tree going always left if possible, otherwise right).
In Figure~\ref{fig:lca} the node $t$ is inside subtree $S_3$.
Then $l_{G'}$ is the edge leading to the left child of $LCA(t,y)$.
As for the subtree $G-l_G$, if $L_G$ is empty, then $G-l_G=G-L_G$, otherwise we return the left main edge of $\subtree(l_G^\downarrow)$.

\myparagraph{Cost of contraction.}
Now we show how to retrieve the value of $\delta(\emptyset,G)$, the cost of contraction of all the edges from $G$.
Note that it is the sum of costs of contraction of all edges ``to the right'' of the path between $l_G$ and $r_G$ plus $c_{del}(l_G)+c_{del}(r_G)$.
For example, in Figure~\ref{fig:lca}, we need to contract $l_G$ and $r_G$ and all edges $e_i$ and their subtrees $S_i$ for $i=1\ldots 5$.
Observe that in order to contract all edges of $G$ we need to contract all edges to the right of the path $(z\ldots x)$ and to the left of $(z\ldots y)$.
Also notice that the path $(z\ldots x)$ is effectively the path $(root \ldots x)$ without its prefix $ (root \ldots z)$.
Thus we can use prefix sums and for every node store only the cost of contraction of all edges to the left or right to the path from the root of $\TT_2$ to the node.

Note that the above observations hold for all possible pruned subtrees of $\TT_2$, for instance also in the case for a subtree $G'$ such that $l_{G'}=r_G$ and $r_{G'}=l_G$ where $G$ is the subtree from Figure~\ref{fig:lca}.
To conclude, it is enough to remember a constant number of values in every node to be able to retrieve all intermediate pruned subtrees and compute the cost of contraction of all edges of a pruned subtree of~$\TT_2$ in a constant time.

\subsection{Computations in Limited Space}\label{se:limited_space}

In this subsection, we describe how to implement the $\computefrom$ procedure in $\Oh(nm)$ space.
Observe that each time we call $\computefrom$ subroutine, there is a set of pruned subtrees of one tree (either $\TT_1$ or $\TT_2$) and two pruned subtrees of the other: the initial and target.
For instance, when we call $\computefrom(\ALLL{\subtree(h_6^\uparrow)},\ALLL{\subtree(h_5^\uparrow)})$, then actually there are considered all pruned subtrees ``down'' $\TT_1$, $\subtree(h_5^\uparrow)$ is the initial tree and $\subtree(h_6^\uparrow)$ is target.
By a pruned subtree ``down'' $\TT_1$ we denote a pruned subtree obtained by a sequence of contractions of a main edge from the root, starting from $\TT_1$
\footnote{Recall that by $\TT_1$ in this context we again denote $\subtree(r_1)$, where $r_1$ is the dart corresponding to the initial rooting of $\TT_1$.}.
Clearly in this example there are considered $\Oh(m^2)$ pruned subtrees of $\TT_1$ and $\Oh(n)$ of $\TT_2$.
It is important that the target tree is obtained from the initial one by a sequence of uncontractions of a main edge always in the same direction.
Later on, we assume, that in this step we always uncontract the left main edge.

In the beginning we enumerate all pruned subtrees that are considered during this step (separately for $\TT_1$ and $\TT_2$) to be able to retrieve indices of subsequent trees in the dynamic program in constant time.
Now the difficulty lies in the fact that we cannot create the table of size $\Oh(nm^2)$ and we overcome it using an approach based on the one described by Demaine et al. \cite{DMRW}.
On a high level, we fix the right main edge of a pruned subtree $F$ of $\TT_1$ and consider all possible left main edges of $F$.
Then there are $\Oh(m)$ candidates for $l_F$ and $\Oh(n)$ candidates for pruned subtree $G$ of $\TT_2$.
The key insight is that while contracting the left main edge of a tree, its right main edge does not change unless it overlaps with the left one (which is the case when there is only one edge from the root).
Using this observation, we can store only the $\Oh(nm)$ values at any time.
However, we need to describe the details carefully.

We first describe in detail implementation of the $\computefrom$ subroutine for the case when the strategy considers only pruned subtrees ``down'' $\TT_1$, what is sufficient to implement all the algorithms for tree edit distance between rooted trees in $\Oh(nm)$ space.
Then we show how to handle also pruned subtrees ``up'' $\TT_2$, which is needed in our new algorithms for unrooted trees.

\myparagraph{Pruned subtrees ``down'' $\TT_1$.}
It might be easier to think, that now we describe how to implement $\computefrom(\ALLL{\subtree(h_5^\downarrow)},\ALLL{\subtree(h_6^\downarrow)}$ in $\Oh(nm)$ space.
In the beginning, we provide an equivalent definition of trees ``down'' $\TT_1$ that will be useful to implement the $\computefrom$ subroutine, see Figure~\ref{fig:tree_down} and Lemma~\ref{le:tree_down}.

\FIGURE{h}{1}{tree_down}
{Every pruned subtree $F$ ``down'' $\TT_1$ has its left main edge $l_F$ either equal to $r_F$ or to the left of the path from root to $r_F$.
All the candidates for $l_F$ are marked with the dashed edges.}

\begin{lemma}\label{le:tree_down}
 For every non-empty pruned subtree $F$ ``down'' $\TT_1$ holds that either $l_F=r_F$ or $l_F$ is strictly to the left of the path from the root to $r_F$.
\end{lemma}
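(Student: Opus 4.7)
The plan is to prove the claim by induction on the number of main-edge contractions used to derive $F$ from $\TT_1$. To make the induction go through, I would carry a slightly strengthened invariant: \emph{every} edge incident to the root of $F$ is either equal to $r_F$ or lies strictly to the left of the path from the root of $\TT_1$ to $r_F$ in $\TT_1$. Since $l_F$ is one such root-edge, this strengthened invariant immediately implies the lemma.

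For the base case $F = \TT_1$, the root has $k \ge 1$ children. If $k = 1$ then $l_F = r_F$. If $k \ge 2$ the path to $r_F$ consists of the single edge $r_F$, and every other root-edge lies strictly to its left in the planar embedding. For the inductive step, write $F' = F - m$ with $m \in \{l_F, r_F\}$. The case $m = r_F$ is symmetric, so focus on $m = l_F$ and assume $|F| \ge 2$ (otherwise $F'$ is empty and there is nothing to prove); then $r_{F'} = r_F$. Contracting $l_F$ replaces the leftmost child $v$ of the root of $F$ by $v$'s children in $F$, appended at the leftmost slot. Hence the root of $F'$ has, in left-to-right planar order, first the edges from $v$ to its children in $F$ (if any) and then the former root-edges of $F$ other than $l_F$. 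Edges in the second group satisfy the invariant by IH directly, because the root-to-$r_{F'}$ path coincides with the root-to-$r_F$ path. Edges in the first group lie inside $\subtree(l_F^\downarrow)$ in $\TT_1$; since by IH $l_F$ itself is strictly to the left of the path to $r_F$, the entire subtree hanging off $l_F$ sits on the left side of that path, so these edges are also strictly to the left.

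The main technical delicacy is pinning down ``to the left of the path'' rigorously. I would make this precise by fixing the planar embedding inherited from $\TT_1$ and observing that removing the path from the root to $r_F$ partitions the remaining edges into subtrees, each hanging off a unique path-node on either the left or right side according to the cyclic order at that node; ``strictly to the left'' then simply means lying in one of the left-hanging subtrees. With this convention, the required geometric fact — that every edge below a left-of-path edge is itself left-of-path — follows from planarity, so the inductive step goes through without further complications. The only bookkeeping point I expect to demand care is tracking the planar order as the root of the pruned subtree accumulates children through repeated contractions, but carrying the strengthened invariant over \emph{all} root-edges rather than only $l_F$ is precisely what handles this accumulation cleanly.
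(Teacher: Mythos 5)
Your overall strategy (induction on the number of main-edge contractions, with the invariant strengthened from $l_F$ to all root edges of $F$) is essentially the paper's: the paper also just checks that the property is preserved under $F\mapsto F-l_F$ and $F\mapsto F-r_F$, and some strengthening is indeed needed for the subcase where the contracted edge's lower endpoint is a leaf. However, two cases are not actually covered by your argument. First, the root of $F$ may have a single child edge, i.e.\ $l_F=r_F$, while $|F|\ge 2$ (e.g.\ when $F$ is a path). Your assertion that $|F|\ge 2$ implies $r_{F'}=r_F$ is then false: contracting the unique main edge promotes the children of its lower endpoint, and the new right main edge is the rightmost of these, so the reference path changes, and $l_F$ is not strictly left of the old path (it lies on it), so your ``first group'' argument does not apply either. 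This is exactly the first of the three cases the paper's proof distinguishes, and your write-up silently assumes the root keeps at least two children.

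Second, the case $m=r_F$ is not symmetric to $m=l_F$, because your invariant is anchored to $r_F$. If the lower endpoint of $r_F$ is a leaf, then after contracting $r_F$ the new right main edge is the old root edge $e'$ immediately to the left of $r_F$, and you must show that every remaining root edge is strictly to the left of the path from the root of $\TT_1$ to $e'$. Knowing only that all root edges are strictly left of the path to the old $r_F$ does not give this: two edges can both hang to the left of that path while one is an ancestor of, or to the right of, the other, and your invariant records neither their relative order nor their incomparability. What closes the induction is the pairwise version of your invariant: for any two root edges of $F$, the one earlier in the root's child order is strictly to the left of the path from the root of $\TT_1$ to the other (equivalently, the root edges are pairwise incomparable in $\TT_1$ and their child order agrees with the Euler-tour order). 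With that invariant all cases, including $l_F=r_F$ and right contractions, go through by the planarity argument you describe; as stated, though, the ``symmetric'' dismissal and the claim $r_{F'}=r_F$ are genuine gaps.
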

\begin{proof}
 It is enough to show that for every pruned subtree $F$ with this property, $F-l_F$ and $F-r_F$ also have this property.
 This holds from the analysis of three cases: when $l_F=r_F$, $l_F\ne r_F$ and subtree of the contracted edge is non-empty or $l_F\ne r_F$ and subtree of the contracted edge is empty.
\end{proof}

Let $S_2$ be the set of all intermediate pruned subtrees of $\TT_2$ obtained by a sequence of uncontractions of the left main edge from the initial tree to target.
Now the algorithm considers candidates for the right main edge of the tree in $\TT_1$ in bottom-up order.
Then it computes edit distance between all pruned subtrees ``down'' $\TT_1$ with the specific right main edge and all trees from $S_2$.
It also needs to store explicitly values of $\delta(F,G)$ for trees $F$ of the form $\subtree(r^\downarrow)+\{r\}$ for $r\in\TT_1$ and $G\in S_2$, because the trees with both main edges overlapping need special attention.
See Algorithm~\ref{alg:memory_down}.

\begin{algorithm}[htb]
\begin{algorithmic}[1]
  \Function{$\computefrom$}{$\ALLL{\text{target}},\ALLL{\text{initial}}$}
  \State $S_2:=$ set of all intermediate pruned subtrees of $\TT_2$ between initial and target tree
  \State create arrays $C$ and $D$ of size $[m][n]$
  \State create array $RESULT$ of size $[m^2]$
  \ForEach{edge $r\in\TT_1$ in bottom-up order}
    \State $S_1:=\{F: \text{ ``down'' }\TT_1 \text{ and } r_F=r \}$
    \State $F'=\subtree(r^\downarrow)+\{r\}$
    \State compute $C[F',G]:=\delta(F',G)$ for all $G\in S_2$ \label{line:compute_f_prim}
    \State $RESULT[F']:=C[F',\text{target}]$
    \State create array $X$ of size $[m][n]$
    \State compute $X[F,G]:=\delta(F,G)$ for all $F\in S_1,G\in S_2$ \label{line:compute_rest}
    \ForEach{$F\in S_1$}
      \State $RESULT[F]:=X[F,\text{target}]$
    \EndFor
    \State $u:=$ the endpoint of $r$ that is closer to the root of $\TT_1$
    \ForEach{$G\in S_2$}
      \State $D[\TT_1^u,G]:=X[\TT_1^u,G]$
    \EndFor
  \EndFor
  \State \Return $RESULT$
 \EndFunction
\end{algorithmic}
\caption{Detailed description and implementation of $\computefrom$ procedure.}
\label{alg:memory_down}
\end{algorithm}

\noindent

Clearly, this subroutine runs in $\Oh(nm)$ space.
The arrays $C$ and $D$ are partially filled in every step of the main loop. $C$ stores edit distance between trees of $\TT_1$ with one edge from the root and $D$ is indexed by a tree $\TT_1^u$ where $u$ is the endpoint of $r$ that is closer to the root and a tree $G$ from $S_2$.
Finally, we need to show that all the required values during computations in lines \ref{line:compute_f_prim} and \ref{line:compute_rest} are available in the local arrays that we store.
In these lines, we process subtrees in the order of increasing sizes.
Recall that we assume that we always uncontract the left main edge.

First consider the step in line~\ref{line:compute_rest}.
While computing $\delta(F,G)$ for $F\in S_1,G\in S_2\setminus\{\text{initial}\}$ where $l_F\ne r_F$ we need to retrieve the value of $4$ subproblems and we show that each time it is available in one of the local arrays:

\begin{itemize}
 \item $\delta(F,G-l_G)=X[F,G-l_G]$, because $G-l_G\in S_2$,
 \item $\delta(F-l_{F},G)=X[F-l_{F},G]$, because $F-l_F\in S_1$,
 \item $\delta(F-L_{F},G-L_G)=\delta(F',G-L_G)=C[F',G-L_G]$, because $G-L_G\in S_2$,
 \item $\delta(L_{F},L_G)=\TAB[l_F^\downarrow,l_G^\downarrow]$ -- possibly $L_G\notin S_2$, so we need to use the value from $\TAB$ computed in an earlier stage.
\end{itemize}

\noindent
Similarly, to compute $\delta(F',G)$ for $G\in S_2\setminus\{\text{initial}\}$ in line~\ref{line:compute_f_prim} we have the following subproblems to consider:

\begin{itemize}
 \item $\delta(F',G-l_G)=C[F',G-l_G]$, because $G-l_G\in S_2$,
 \item $\delta(F'-l_{F'},G)=\delta(L_{F'},G)=D[L_{F'},G]$ which has already been computed, because we consider edges $r$ in bottom-up order,
 \item $\delta(F-L_{F'},G-L_G)=\delta(\emptyset,G-L_G)$ which we can retrieve in constant time after the preprocessing described in Section~\ref{se:preproc},
 \item $\delta(L_{F'},L_G)=\TAB[l_{F'}^\downarrow,l_G^\downarrow]$ -- as above.
\end{itemize}

In both variants, in the last case of $\delta(L_F,L_G)$ and $\delta(L_{F'},L_G)$ we used the values from the $\TAB$ table, which were computed by Demaine et al.'s algorithm in the very beginning.
However, we can also use the same implementation inside Demaine et al.'s algorithm, but then have to carefully analyze, that indeed the used values have already been computed and stored in the table.

Observe, that the very same implementation works even if there are two input tables, for instance in $\computefrom(\ALLL{\merged^R(A,M)}, \{\ALLL{\merged^R(A,B)};\ALLL{\subtree(h_{A-1}^\uparrow)}\})$. Note that in this case the set $S_2$ also contains the trees of the form $G-L_G$ and all the subsequent ones.
It needs to be slightly larger, because we need to ensure that for every $G\in S_2\setminus\{\text{initial}\}$ both $G-l_G$ and $G-L_G$ belong to $S_2$, which is the case as stated in Observation~\ref{le:nothing_missed}.
Similarly, note that it does not make any difference when we consider only pruned subtrees of $\TT_1$ of size bounded from above, i.e., smaller than $n/b$ (marked with $*$).

\myparagraph{Pruned subtrees ``down'' and ``up'' $\TT_2$.}
Now we need to slightly modify this approach, because in the $\insidegroup$ subroutine we consider also pruned subtrees ``up'' $\TT_2$: in line~\ref{line:avoid_hc} of Algorithm~\ref{alg:inside_a_group} we call $\computefrom(\delta(\TT_1^{u(i)},[l_G,r_G\in I]),\delta(\TT_1^{u(i+1)},[l_G,r_G\in I]))$.
In this case, we need to consider all possible pruned subtrees of $\TT_2$ defined by their two main edges from $I$.
We do it in two steps.
First is symmetric to the Algorithm~\ref{alg:memory_down} for all pruned subtrees ``down'' $\TT_2$, that is trees $G$ with $l_G$ to the left of the path from root of $\TT_2$ to $r_G$ (marked with dashed lines in Figure~\ref{fig:tree_down}) and the tree $\subtree(r_G^\downarrow)+\{r_G\}$ (with $l_G=r_G$).
The only difference is that now the roles of $\TT_1$ and $\TT_2$ are switched.

The second step is for all the remaining pruned subtrees of $\TT_2$, with the left main edge not to the left of the path from the root to $r_G$ (marked with solid lines in Figure~\ref{fig:tree_down}).
By the tree with $l_G=r_G$ we mean $\subtree(r_G^\uparrow)+\{r_G\}$.
It is done similarly, but now we need to consider edges $r$ in top-down order and store $D'[F,\subtree(e^\uparrow)]$, to be able to handle also the case of $G'=\subtree(r^\uparrow)+\{r\}$.
We also have to simultaneously fill the missing values of $\TAB[u,e^\uparrow]$ inside the procedure, because they have been already filled only for edges from the heavy path $H$, but not from the connected small subtrees.
To conclude, with these two steps we can implement the $\computefrom$ subroutine in $\Oh(nm)$ space.

We also need to elaborate more on the $\insidegroup$ procedure, in which there are considered pruned subtrees $G$ such that $l_G,r_G\in I$, but then the subsequent subtrees might have a main edge inside $I\cup D$.
However, we have already computed the values of these subproblems as mentioned in Observation~\ref{obs:edges_in_d}, so can retrieve them in a constant time.
Notice that from the computations in lines \ref{line:inside2_beg}-\ref{line:inside2_end} it is enough to store only the tables $\delta(\TUS,[r_G=h_{A-1}])$, $\delta(\TUS,[r_G=h_B])$, $\delta(\TUS,[l_G=h_{A-1}])$ and $\delta(\TUS,[l_G=h_B])$ respectively.
To sum up, also $\insidegroup$ and thus all the computations inside $\group$ procedure fit in the desired $\Oh(nm)$ space.

\subsection{Total Memory on Recursion Stack}

In the previous subsection we showed how to implement all the intermediate computations inside $\group,\insidegroup$ and $\computefrom$ in $\Oh(nm)$ space.
Clearly, these computations are disjoint, that is every time we run $\computefrom$ we can use the one and very same tables $C,D$ of size $\Oh(nm)$ and we only need to store separately inputs and outputs to the procedure.

Recall that during all the computations in $\TT_2$ we not always consider the whole tree $\TT_1$ and its all edges.
All the computations take into account the heavy path $\LH$ from $\TT_1$ and only the $\mm$ edges from the subtree $\TT_1^u$ of its top node $u$.
Then we have the following lemma about the size of tables passed to and from the functions:

\begin{lemma}
  For every call of function $\group,\insidegroup$ or $\computefrom$, size of input and returned tables is $\Oh(\mm^2)$. 
\end{lemma}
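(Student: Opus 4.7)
The plan is to chase the definitions and observe that every table moved in or out of the three routines is a constant-size collection of atomic tables of the form $\ALLL{G}$ for a single pruned subtree $G$ of $\TT_2$. Since $\ALLL{G}$ is indexed by pruned subtrees of $\TT_1^u$, where $u$ is the top node of the current heavy path $\LH$ of $\TT_1$, and there are $\Oh(\mm^2)$ such subtrees (as $|\TT_1^u|=\mm$), each atomic table has exactly $\Oh(\mm^2)$ entries.

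For $\group$ and $\insidegroup$, the input is always $(A,B)$ together with the bundle
\[
\data(A,B)=\bigl(\ALLL{\subtree(h_{A-1}^\uparrow)},\,\ALLL{\subtree(h_B^\downarrow)},\,\ALLL{\merged^L(A,B)},\,\ALLL{\merged^R(A,B)}\bigr),
\]
a fixed number of atomic tables, so the input fits in $\Oh(\mm^2)$ space. Neither routine returns a table: their only external effects are populating entries of the global $\TAB$ and dispatching to smaller instances of $\group$, $\insidegroup$, or $\processheavypath$. For $\computefrom$, I would appeal directly to the signature in Algorithm~\ref{alg:memory_down}: a single invocation is parametrised by one target table and one or two initial tables, each of shape $\ALLL{G}$, and returns the array $RESULT$ indexed by pruned subtrees of $\TT_1^u$; hence input and output are each $\Oh(\mm^2)$.

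The one subtlety I expect is the compact pseudocode inside $\insidegroup$ (lines~\ref{line:inside2_beg}--\ref{line:inside2_end}), where a line such as $\computefrom(\ALLL{[r_G=h_{A-1}]},\ldots)$ appears to bulk-produce $\Oh(\mm)$ tables at once. I would resolve this by unfolding each such line into a loop of $\Oh(\mm)$ individual $\computefrom$ invocations, one per $G$ in the indicated family, each with a single-$G$ target. Observations~\ref{le:nothing_missed} and~\ref{obs:edges_in_d} guarantee that every intermediate subproblem arising in one such single-target call is already present either in an earlier per-loop output (written into $\TAB$) or in the two input tables handed to $\computefrom$; the unfolding therefore preserves correctness and keeps each actual call within the $\Oh(\mm^2)$ interface budget asserted by the lemma. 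Combined with the $\Oh(1+\log\tfrac{n}{\mm})$ recursion-depth bound established earlier in Section~\ref{se:n^3}, this yields the $\Oh(nm)$ stack-memory target announced at the start of Section~\ref{se:implementation}.
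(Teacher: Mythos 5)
There is a genuine gap. Your blanket structural claim --- that every table passed to or returned from the three routines is a constant-size collection of atomic tables $\ALLL{G}$ for single pruned subtrees $G$ of $\TT_2$ --- fails exactly at the call that the paper's proof singles out as the only nontrivial one: the invocation in line~\ref{line:avoid_hc} of $\insidegroup$, namely $\computefrom(\delta(\TT_1^{u(i)},[l_G,r_G\in I]),\delta(\TT_1^{u(i+1)},[l_G,r_G\in I]))$. Here the roles of the two trees are switched: each table is indexed by a \emph{single} pruned subtree of $\TT_1$ together with \emph{all} pruned subtrees of $\TT_2$ having both main edges in $I$, of which there are $\Theta(|I|^2)$. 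This is not covered by your appeal to the signature of Algorithm~\ref{alg:memory_down} (which handles the opposite orientation), and bounding it by $\Oh(\mm^2)$ is not automatic: it requires invoking the stopping condition of the $\group$ recursion, i.e.\ that when $\insidegroup$ is reached the set $I$ contains at most $\mm$ edges (the condition $B-A<\mm$, resp.\ $|\bet(A,B)|<\mm$). Your argument never uses this fact, yet it is the entire content of the nontrivial case in the paper's proof; without it the lemma simply does not follow from the shape of the tables.

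A secondary issue is your treatment of lines~\ref{line:inside2_beg}--\ref{line:inside2_end}. Unfolding each bulk call into $\Oh(\mm)$ single-target $\computefrom$ invocations does keep each individual call within an $\Oh(\mm^2)$ interface, but it leaves unexplained how the $\Oh(\mm)$ produced tables are retained for the subsequent loop in line~\ref{line:loop_hc} without blowing the local space up to $\Oh(\mm^3)$ inside a single $\insidegroup$ call. The paper resolves this by observing that only the restrictions $\delta(\TUS,[r_G=h_{A-1}])$, $\delta(\TUS,[r_G=h_B])$, $\delta(\TUS,[l_G=h_{A-1}])$ and $\delta(\TUS,[l_G=h_B])$ are needed later, and each of these has $\Oh(\mm)\cdot\Oh(\mm)=\Oh(\mm^2)$ entries since $\TUS$ contains only the $\Oh(\mm)$ pruned subtrees arising from the strategy avoiding the heavy child; some argument of this kind (or writing the values into $\TAB$ and proving that suffices) is needed for your unfolding to serve the purpose the lemma is used for, namely the $\Oh(\mm^2)$-per-stack-frame bound.
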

\begin{proof}
  The lemma clearly holds when we have a table of edit distance between a pruned subtree from $\TT_2$ and a set of pruned subtrees of $\TT_1^u$ because there are $\Oh(\mm^2)$ of them.
  The only situation in which we consider many pruned subtrees of $\TT_2$ is in $\computefrom(\delta(\TT_1^{u(i)},[l_G,r_G\in~I]), \delta(\TT_1^{u(i+1)},[l_G,r_G\in I]))$, but then there are also $\Oh(\mm^2)$ of them, as the set $I$ contains at most $\mm$ elements.
\end{proof}

Thus every recursive call pushes $\Oh(\mm^2)$ values on the stack.
From the analysis of Algorithm~\ref{alg:grouping_trick} we know that the depth of the recursion is $\Oh(1+\log \frac n\mm)$, so for the heavy path $\LH$ there are in total $\Oh(\mm^2\cdot (1+\log \frac n\mm))$ values stored on the recursion stack.
Using the inequality $\log x\le x$, we finally obtain that throughout the whole algorithm, there is $\Oh(n\mm)$ values on the stack.
Adding the auxiliary tables of the overall size $\Oh(nm)$ which are shared among $\computefrom$ calls, table~$\TAB$ and all the space used by Demaine et al.'s algorithm, we conclude that the whole algorithm computing edit distance between unrooted trees can be implemented in $\Oh(nm)$ space.

\begin{theorem}
 The algorithm computing edit distance between unrooted trees of sizes $n,m$ where $m \le n$ runs in $\Oh(nm^2(1+\log\frac nm))$ time and $\Oh(nm)$ space.
\end{theorem}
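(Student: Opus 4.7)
The plan is to assemble the time bound from Section~\ref{se:n^3} and the space bound from Section~\ref{se:implementation} into a single statement; no new analysis is really required, but I need to be careful to combine the per-phase estimates correctly and to verify that the deterministic implementation does not spoil the time bound.

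\textbf{Time.} First I would recall that the algorithm has two parts. The initial call to Demaine et al.'s algorithm fills $\TAB[u,e_v^\downarrow]$ in $\Oh(nm^2(1+\log\frac nm))$ time. The remaining computation of $\TAB[u,e^\uparrow]$ is done phase-by-phase, one phase per heavy path $\LH$ of $\TT_1$. For a phase with top node $u$ and $\mm=|\TT_1^u|$, the argument in Section~\ref{se:n^3} (using the special-node version of $\group$ in Algorithm~\ref{alg:grouping_trick}) shows that every edge of $\TT_2$ is touched in $\Oh(1+\log\frac{n}{\mm})$ recursive calls and that each call contributes $\Oh(\mm^2)$ work per edge, so one phase uses $\Oh(n\mm^2(1+\log\frac{n}{\mm}))$ subproblems. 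Summing this over all heavy paths of $\TT_1$ and applying Lemma~\ref{le:sum} gives $\Oh(nm^2(1+\log\frac nm))$ subproblems, matching the bound of the preliminary Demaine-et-al.\ call. Under the $\Oh(1)$ memoization assumption this is also the total running time.

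\textbf{Space.} I would combine three ingredients from Section~\ref{se:implementation}. First, the preprocessing of Section~\ref{se:preproc}, built on a linear-space constant-query $LCA$ structure, supports constant-time access to the intermediate pruned subtrees $G-l_G,G-L_G,L_G,\ldots$ and to the total contraction cost $\delta(\emptyset,G)$, in $\Oh(n)$ space. Second, the sweep that fixes $r_F$ and iterates over candidate left main edges (Algorithm~\ref{alg:memory_down} and its ``up'' variant) implements $\computefrom$ with $\Oh(nm)$ working memory, because the Zhang--Shasha recurrence always queries operands stored in the local arrays $C,D,X$ or in $\TAB$. Third, by the preceding lemma each recursive frame of $\group$ or $\insidegroup$ carries $\Oh(\mm^2)$ values, and since phases run sequentially only one phase's stack is alive at any moment. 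The recursion depth for the phase of $\LH$ is $\Oh(1+\log\frac{n}{\mm})$, so the stack holds $\Oh(\mm^2(1+\log\frac{n}{\mm}))$ values; using $\log x\le x$ this is $\Oh(\mm\cdot n)=\Oh(nm)$. Adding the shared $\computefrom$ workspace, the preprocessing tables, the table $\TAB$, and the space used by Demaine et al.'s algorithm — all $\Oh(nm)$ — yields the claimed $\Oh(nm)$ total.

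\textbf{Main obstacle.} The one genuinely delicate point is reconciling the $\Oh(nm)$ space budget with the assumption of deterministic constant-time retrieval of already computed $\delta(F,G)$, since the naive hash-table memoization would need $\Oh(nm^2)$ entries. I would emphasize that this is exactly what the scheduling in Section~\ref{se:limited_space} achieves: processing $\TT_1$-edges bottom-up while always uncontracting the left main edge of $\TT_2$ guarantees that at the moment any recurrence invocation is evaluated, each of its four operands lies in one of the constant-number of $\Oh(nm)$ local arrays or in $\TAB$, so no hashing is needed and the preceding time analysis carries over verbatim. Putting the time and space statements together then gives exactly the $\Oh(nm^2(1+\log\frac nm))$ time, $\Oh(nm)$ space claim of the theorem.
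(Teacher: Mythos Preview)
Your proposal is correct and follows essentially the same approach as the paper: the theorem is simply the concluding statement assembled from the per-phase time bound of Section~\ref{se:n^3} (summed via Lemma~\ref{le:sum}) together with the three space ingredients of Section~\ref{se:implementation} (preprocessing, the $\Oh(nm)$ implementation of $\computefrom$, and the $\Oh(\mm^2)$-per-frame recursion stack bounded via $\log x\le x$). Your identification of the deterministic constant-time retrieval issue and its resolution through the scheduling of Section~\ref{se:limited_space} also mirrors the paper's treatment exactly.
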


\section{Lower Bound}\label{se:lower_bound}

In this section, we restate known lower bounds for computing the edit distance between rooted trees (called rooted TED) and prove that they also hold for unrooted trees.
First, Demaine et al. \cite{DMRW} proved the following lower bound for decomposition algorithms:

\begin{theorem}[\cite{DMRW}]
For every decomposition algorithm for rooted TED and $n\ge m$, there exist trees $F$ and $G$ of sizes $\Theta(n)$ and $\Theta(m)$ such that the number of relevant subproblems is $\Omega(m^2n(1+\log\frac{n}{m}))$.
\end{theorem}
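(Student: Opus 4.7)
The proof proceeds by constructing an explicit hard instance. I would take $F$ to be a complete binary tree on $\Theta(m)$ nodes and $G$ a caterpillar on $\Theta(n)$ nodes, consisting of a spine with a single pendant node attached at each spine position. These shapes are chosen because $F$ has $\Theta(m^2)$ distinct pruned subtrees (one for every pair of pruning positions) and its heavy path decomposition has depth $\Theta(\log m)$, while $G$ has $\Theta(n^2)$ pruned subtrees naturally indexed by intervals of the spine. Both shapes are ``universally bad'' in the sense that essentially no pruned subtree can be collapsed with another by symmetry.

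My plan is to fix an arbitrary decomposition algorithm $A$ and trace its execution on $(F,G)$. The argument has two parts. First, I would show by induction on the recursion depth that for any pruned subtree $G'$ of $G$ with $|G'|\ge m$, the algorithm $A$ is forced to visit $\Omega(m^2)$ distinct pruned subtrees of $F$ paired with $G'$. This uses the match branch of the Zhang--Shasha recurrence: whenever $A$ matches the main edge of $F'$ at an internal node of the balanced tree, $F'$ splits into two independent pieces corresponding to the two children, and the inductive hypothesis applies to each. Because $F$ is balanced binary, no choice of left/right strategy lets $A$ avoid recursing into both children of each internal node, and the $\Theta(m^2)$ distinct pruned subtrees accumulate.

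Second, I would count the pruned subtrees of $G$ that must appear, using the divide-and-conquer structure forced by the caterpillar shape. Group spine intervals by scale: at scale $2^im$ there are $\Theta(n/(2^im))$ disjoint spine intervals of length $\Theta(2^im)$, and a charging argument shows that a constant fraction of the $\Theta(2^im)$ pruned subtrees within each interval must be visited by $A$. Summing across the $\Theta(1+\log(n/m))$ scales $m,2m,\dots,n$ and multiplying by the $\Omega(m^2)$ subtrees of $F$ from the previous step yields $\Omega(nm^2(1+\log(n/m)))$ relevant subproblems.

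The main obstacle will be the first step: making the adversarial argument rigorous against \emph{every} decomposition strategy, since such a strategy has freedom to choose left or right main edge at each subproblem and could, in principle, try to reuse subproblems cleverly. The argument depends on the balanced-binary structure of $F$ and on a careful charging scheme that associates each relevant subproblem uniquely with an ancestor in $A$'s execution trace, so that distinct pruned subtrees cannot be double-counted. This is precisely the technical core of the DMRW construction that I would need to reproduce.
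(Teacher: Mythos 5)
There is a genuine gap, and it is in the very first design decision: the instance you pick is provably \emph{not} hard, so no amount of care in the charging argument can recover the stated bound. (Note also that the paper does not prove this theorem at all --- it is quoted from Demaine et al.~\cite{DMRW} --- so you are in effect reproving their lower bound from scratch.) Take your $F$ = complete binary tree on $\Theta(m)$ nodes and $G$ = caterpillar on $\Theta(n)$ nodes, and consider the decomposition strategy that always avoids the heavy child of the \emph{current pruned subtree of $G$}, i.e.\ Klein's strategy applied to the caterpillar side. The pruned subtrees of $G$ that this strategy ever generates are bounded by $\sum_{v\ \mathrm{light\ in\ }G}|G^v|$, and for a caterpillar every light subtree is a single pendant node, so this sum is $O(n)$; the pruned subtrees of $F$ that can appear are at most all $O(m^2)$ of them. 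Hence this strategy has only $O(m^2 n)$ relevant subproblems on your instance, with no $\log\frac nm$ factor. This directly contradicts your second step: it is simply false that a constant fraction of the $\Theta(2^i m)$ pruned subtrees in each spine interval at each of the $\Theta(1+\log\frac nm)$ scales must be visited --- against the strategy above only $O(n)$ pruned subtrees of $G$ arise in total. A caterpillar is the \emph{easiest} possible shape for the large tree (its heavy-path decomposition has linear cost), which is exactly why it cannot witness the $\log\frac nm$ term.

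A secondary issue is in your first step: what must be counted is the number of distinct \emph{pairs} $(F',G')$, and your induction only argues that $\Omega(m^2)$ subtrees of $F$ appear paired with each large $G'$; but a strategy is free to (and the counter-strategy above does) reuse the same $O(n)$-size family of $G'$ across all $F'$, so per-$G'$ counting of $F$-subtrees cannot by itself force more than $O(m^2 n)$ pairs. The known constructions avoid both problems by making \emph{both} trees hard for whichever side the strategy leans on: Dulucq--Touzet-style ``zigzag'' trees in which every spine node carries an off-spine child alternating between left and right, so that either direction choice keeps generating fresh subproblems on both sides, and for the extra $\log\frac nm$ factor the larger tree is built by recursively composing such gadgets rather than being a flat spine. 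Your proposal defers precisely this technical core (``the adversarial argument against every strategy''), and with the caterpillar/complete-binary-tree instance it cannot be completed, so the theorem is not established.
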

\noindent
which matches the complexity of the algorithm they provided.
Recently, Bringmann et al. \cite{TED_LowerBound} proved that a truly subcubic $\Oh(n^{3-\eps})$ algorithm for rooted TED is unlikely:

\begin{theorem}[\cite{TED_LowerBound}]
A truly subcubic algorithm for rooted TED on alphabet size $|\Sigma|=\Omega(n)$ implies a truly subcubic algorithm for APSP.
A truly subcubic algorithm for rooted TED on sufficiently large alphabet size $|\Sigma|=\Oh(1)$ implies an $\Oh(n^{k(1-\eps)})$ algorithm for Max-Weight $k$-Clique.
\end{theorem}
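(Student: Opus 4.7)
The plan is to carry out two separate fine-grained reductions, one from APSP (equivalently, from $(\min,+)$-matrix product) for the first part, and one from Max-Weight $k$-Clique for the second part. In both cases the strategy is to build two edge-labeled rooted trees whose edit distance, after careful analysis of the Zhang--Shasha recurrence, provably equals an affine function of the target quantity, and whose total size is $\Theta(n)$, so a hypothetical $\Oh(n^{3-\eps})$ algorithm for rooted TED translates directly to the claimed complexity on the source problem.

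For the APSP part I would first invoke the standard equivalence with $(\min,+)$-product: given $n\times n$ matrices $A$ and $B$ with polynomially bounded integer entries, it suffices to compute all entries of $C=A\star B$ in $\Oh(n^{3-\eps})$ time. I would encode $A$ and $B$ into two trees $T_1,T_2$ of size $\Theta(n)$ using an $\Omega(n)$-sized alphabet: each ``column block'' of one tree would be a caterpillar whose edge labels spell out one row of $A$ (or column of $B$), and the relabeling costs $c_{match}$ would be chosen so that matching the label encoding $A[i,k]$ with the label encoding $B[k,j]$ contributes exactly $A[i,k]+B[k,j]$, while mismatched alignments are forced to be prohibitively expensive (e.g., $+\infty$, implemented by a large polynomial value). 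The structural skeleton of the two trees would be rigid enough that the optimum edit sequence is forced to align the $i$-th block of $T_1$ with the $j$-th block of $T_2$ via a single index $k$, so the optimum cost equals $\min_k(A[i,k]+B[k,j])=C[i,j]$. Reading $C$ off from a polynomial number of such TED instances, each of size $\Theta(n)$, yields an $\Oh(n^{3-\eps})$ algorithm for $(\min,+)$-product and hence for APSP.

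For the Max-Weight $k$-Clique part I would replace the direct numerical encoding by an alignment gadget over a constant alphabet. Given a weighted graph $G$ on $n$ vertices and a target $k$, I would partition $V(G)$ into $k$ groups and build two trees each of size $\Theta(n^{k/3})$ (so that the overall instance has size $N=\Theta(n^{k/3})$ and an $\Oh(N^{3-\eps})$ TED algorithm yields an $\Oh(n^{k(1-\eps/3)})$ algorithm for Max-Weight $k$-Clique, which is truly subcubic in $N$). Each tree would consist of a skeleton whose leaves are replaced by gadgets, where every gadget encodes a triple of vertex groups plus the edge weights among them. The labels come from a fixed constant-size alphabet, but weights are simulated by copies of unit-cost edges, which is the standard trick for keeping the alphabet small. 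The TED recurrence then factorizes across groups of coordinates, and the minimum cost corresponds exactly to selecting one vertex per group minimizing the total pairwise-edge cost.

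The main obstacle is the second reduction. The first one is mostly a careful bookkeeping exercise once the blocks and sentinel labels are set up. The second one, however, requires designing the constant-alphabet gadgets so that: (a) there is no ``cheating'' alignment that beats the intended one by exploiting contractions across gadget boundaries, and (b) the optimum TED decomposes cleanly over the $k$ coordinate groups. Both conditions must be proved by induction on the recursive structure of the trees, exploiting the fact that the Zhang--Shasha recurrence of Lemma~\ref{le:dp} always breaks the problem at a pair of matched main edges, which forces the dynamic program to respect the skeleton. Getting the size bound right so that the exponent $k(1-\eps)$ comes out exactly, rather than with an extra constant loss, also requires balancing the skeleton height against the gadget size carefully.
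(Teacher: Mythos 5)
This theorem is not proved in the paper at all: it is quoted verbatim from Bringmann et al.~\cite{TED_LowerBound}, and the paper's only original lower-bound content is the subsequent rooted-to-unrooted gadget of Section~\ref{se:lower_bound}. So you are attempting to reprove an imported result, and your sketch, as written, has a genuine quantitative gap in the APSP part. You propose to build, for each entry $C[i,j]$ of the $(\min,+)$-product, a separate TED instance of size $\Theta(n)$ whose edit distance equals $\min_k(A[i,k]+B[k,j])$, and then to ``read $C$ off from a polynomial number of such instances.'' That requires $n^2$ instances, each solved in $\Oh(n^{3-\eps})$ time, for a total of $\Oh(n^{5-\eps})$ --- which is not a truly subcubic APSP algorithm, so the conclusion of the theorem does not follow. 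The missing idea is that the whole matrix problem must be packed into a \emph{single} instance of size $\Oh(n)$: one reduces from a single-value problem such as negative-triangle or maximum-weight triangle detection (which is subcubic-equivalent to APSP by Vassilevska Williams--Williams), arranging the two trees so that the one edit-distance value encodes a global optimum over all triples $(i,j,k)$. This is exactly how the actual reduction of Bringmann et al.\ works, and it is the step your plan does not contain.

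Two further points. First, your size bookkeeping for the clique part gives, with $N=\Theta(n^{k/3})$, an $\Oh(N^{3-\eps})=\Oh(n^{k(1-\eps/3)})$ algorithm, not the stated $\Oh(n^{k(1-\eps)})$; this is only a rescaling of $\eps$ and still refutes the Max-Weight $k$-Clique hypothesis, but it does not literally match the bound you set out to prove. Second, in both reductions the entire difficulty --- ruling out ``cheating'' alignments that contract across gadget boundaries, and showing the optimum decomposes over blocks under the recurrence of Lemma~\ref{le:dp} --- is asserted rather than argued; without concrete gadgets and a proof of these structural lemmas, the sketch does not constitute a proof. If your goal is to match what this paper actually does, the relevant task is only the reduction from rooted to unrooted TED (the $\$$-edge gadget of Figure~\ref{fig:lower_bound}), with the rooted lower bounds taken as a black box.
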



\subsection{Unrooted Case is Also Hard}

Now we show a reduction from rooted TED to the same problem for unrooted trees (unrooted TED).
It increases the number of nodes of a tree and size of the alphabet by a constant number, so the lower bounds from the rooted case will also apply for the unrooted case.

Given an instance $I=(\TT_1,\TT_2,\Sigma)$ of rooted TED we want to construct an instance $I'=(\TT_1',\TT_2',\Sigma')$ of unrooted TED such that, given an optimal solution of $I'$ it is possible to obtain an optimal solution of $I$.
Clearly, it is not enough to set $I'=I$, because it might be possible to change rooting of one of the trees (say $\TT_2'$) to obtain a smaller edit distance than between rooted $\TT_1$ and $\TT_2$.
That is actually the hardness in the problem of unrooted TED.

We need a gadget which ensures, that even if we allow all possible rootings, from every optimal rooting of $\TT_1'$ and $\TT_2'$ it is possible to obtain an optimal solution for $\TT_1$ and $\TT_2$.
It is enough to add one edge from the root as shown in Figure~\ref{fig:lower_bound} and appropriately set costs of contraction and relabeling of the fresh label $\$\notin \Sigma$ to force that edges with $\$$ are matched with their counterparts in the other tree.
More precisely, the costs are set as follows: $c_{del}(\$)=\infty$, $c_{match}(\$,\$)=0$ and $c_{match}(\$,\alpha)=\infty$ for $\alpha\ne \$$.

\FIGURE{h}{1}{lower_bound}
{A gadget changing an instance of rooted TED to unrooted TED.}

Clearly, in every optimal solution $OPT'$ of $I'$, the new edges with labels $\$$ are matched with each other.
Observe, that no matter how the trees are rooted in $OPT'$, we can rotate both trees simultaneously in such a way, that $\$$ is outgoing from the root as in Figure~\ref{fig:lower_bound}.
Informally, we can think of holding the trees by the edge with $\$$, with the original tree hanging down as in the initial rooting.

To conclude, our reduction adds only one new node to every tree and one new fresh label to the alphabet and allows retrieving an optimal solution of $I$ from an optimal solution of $I'$.
Thus, all lower bounds from the rooted TED hold also for unrooted TED.
Particularly, we proved that every decomposition algorithm for unrooted TED runs in $\Omega(m^2n(1+\log\frac{n}{m}))$ time which matches the complexity of our algorithm.
Finally, it is unlikely that there exists a truly subcubic $\Oh(n^{3-\eps})$ algorithm for unrooted TED.

\bibliography{biblio}

\appendix

\section{Demaine et al.'s \texorpdfstring{$\Oh(n^3)$}{O(n3)} Algorithm}\label{se:demaine}

In this section, we describe in detail the algorithm of Demaine et al. \cite{DMRW}, which is the fastest known algorithm for computing tree edit distance between rooted trees.
Our presentation is different from the original and adapted to the case of tree edit distance with labels on edges instead of nodes.

On a high level, their approach is to avoid the heavy child in currently larger of the two considered pruned subtrees, however once decided to avoid the heavy child in on of the trees, then they avoid the heavy child in this tree until it becomes empty.
Again, the algorithm uses the dynamic program from Lemma~\ref{le:dp} and to compute $\delta(F,G)$ needs to choose a direction, either left or right, for further computations.
It allows switching strategy to the other subtree only in the last case of the dynamic program in Lemma~$\ref{le:dp}$, in the recursive call of $\delta(R_F,R_G)$ or $\delta(L_F,L_G)$.

The strategy is to avoid the heavy child of either of the two trees $F$ or $G$ and the chosen tree will be denoted by $X$.
To choose the tree $X$ we need to check a slightly more complex condition: if either of $F$ or $G$ is empty, then $X$ is the non-empty of the two, otherwise if $|\TT_1^{\apex(F)}|>|\TT_2^{\apex(G)}|$ then $X$ is $F$, else $G$.
Recall that $\apex(F)$ is the top node of the heavy path containing the lowest common ancestor of all edges in $F$. See Algorithm~\ref{alg:Oren}.

\begin{algorithm}[htb]
\begin{algorithmic}[1]
  \Function{$\delta(F,G)$}{}
  \If{$F=G=\emptyset}$
  \Return $0$
  \EndIf
  \If{$G=\emptyset$ \textbf{ or } ($F\ne \emptyset$  \textbf{ and }  $|\TT_1^{\apex(F)}|>|\TT_2^{\apex(G)}|$)}
    \State $X:=F$
   \Else
     \State $X:=G$
  \EndIf
  \If{right child of the root of $X$ is not the heavy one}
    \State \Return $\min
    \begin{cases}
  \delta(F-r_F,G)+c_{del}(r_F) & \text{ if } F\ne\emptyset\\ 
  \delta(F,G-r_G)+c_{del}(r_G) & \text{ if } G\ne\emptyset\\
  \delta(R_F,R_G)+\delta(F-R_F,G-R_G)+c_{match}(r_F,r_G) & \text{ if } F,G\ne\emptyset\\   
 \end{cases}$
  \Else
    \State \Return $\min
    \begin{cases}
  \delta(F-l_F,G)+c_{del}(l_F) & \hspace{0.3cm} \text{ if } F\ne\emptyset\\ 
  \delta(F,G-l_G)+c_{del}(l_G) & \hspace{0.3cm} \text{ if } G\ne\emptyset\\
  \delta(L_F,L_G)+\delta(F-L_F,G-L_G)+c_{match}(l_F,l_G) & \hspace{0.3cm} \text{ if } F,G\ne\emptyset\\
 \end{cases}$
  \EndIf
  \EndFunction
\end{algorithmic}
\caption{Our presentation of Demaine et al.'s algorithm \cite{DMRW} for rooted TED.
}
\label{alg:Oren}
\end{algorithm}

Now we assume that both trees are of size $n$, $X=F$ and analyze the complexity of the strategy avoiding the heavy child in $X$.
Clearly, $|F|,|G|\le |\TT_1^{\apex(X)}|$.
Observe that $F$ was obtained from $\TT_1^{\apex(F)}$ by a sequence of successive contractions according to the strategy.
Next, $F-R_F$ is the tree $F$ with the right main edge contracted many times.
Thus, both $F-r_F$ and $F-R_F$ are also obtained from $\TT_1^{\apex(F)}$ by a sequence of successive contractions according to the strategy and $\apex(F-r_F)=\apex(F-R_F)=\apex(F)$.
Then, the only subsequent recursive call $\delta(F',G')$ in which $\apex(F')\ne \apex(F)$ is due to the recursive call $\delta(R_F,R_G)$ or $\delta(L_F,L_G)$.
It may also hold that $\apex(R_F) = \apex(F)$ when $F$ has only one child, and the strategy chooses the edge leading to the heavy child of the root.

We say, that a subproblem $\delta(F,G)$ is charged to the node $\apex(X)$.
Consider a node $v$ in $\TT_1$.
Suppose that $v$ is light because otherwise nothing is charged to it.
If $v$ is heavy, then there is no subproblem charged to $v$, so now suppose that $v$ is light.
From all the above observations we have, that among all subproblems $(F,G)$ charged to $v$, there are $\Oh(|T_1^v|)$ different pruned subtrees $F$ of $\TT_1$.
Next, all pruned subtrees $G$ of $\TT_2$ are not bigger than $|\TT_1^v|$ and each of them corresponds to an interval of Euler tour of length $n$.
Thus, there are $\Oh(n|T_1^v|)$ different pruned subtrees of $\TT_2$ among subproblems charged to node $v$ and in total there are $\Oh(n|T_1^v|^2)$ subproblems charged to a light node $v$ of $\TT_1$. 

Now by summing over all light nodes of $\TT_1$, we obtain that there are $\Oh(n\sum_{v: \text{ light node in }\TT_1}|\TT_1^v|^2)$ subproblems visited when $X=F$.
Because a symmetric argument holds for $X=G$, it remains to upper bound $t(n):=\sum_{v: \text{ light node in }\TT}|\TT^v|^2$ where $\TT$ is a tree of size $n$.
Denoting by $n_i$ the total size of the $i$-th subtree connected to the heavy path containing the root of $\TT$, we obtain the following bound: $t(n)\le n^2+\sum t(n_i)$.
It holds that $n_i\le (n-1)/2$ as the $i$-th subtree is rooted at a light node and $\sum_i n_i \le n-1$ as the subtrees connected to the heavy path are disjoint.
Now we prove by induction that $t(n)\le 2n^2$.

Using the inequality: $a^2+b^2\le (a-1)^2+(b+1)^2$ for $a\le b$ we can upper bound the sum $\sum_i n_i^2 $ with $2\cdot(n/2)^2$ iteratively choosing distinct indices $i,j$ such that $0<n_i\le n_j<(n-1)/2$, decreasing $n_i$ and increasing $n_j$ by $1$.
Combining it with the recurrence relation and the induction hypothesis we get:
$$t(n)=\sum_{v: \text{ light node in }\TT}|\TT^v|^2=n^2+\sum_i t(n_i) \le n^2+2\cdot \sum_i n_i^2 \le n^2+ 2\cdot2\cdot (n/2)^2 =2n^2$$
We conclude that there are $n\cdot t(n)=\Oh(n^3)$ subproblems visited when $X=F$ and similarly for $X=G$.
The algorithm can be also proved to run in $\Oh(nm^2(1+\log\frac nm))$ time for trees of unequal sizes $m\le n$, separately considering light nodes (apexes) $u$ such that $|T^u|\le m$ and $|T^u|>m$.

\end{document}